\newcommand{\mbC}{\mathbb C}
\newcommand{\oM}{\overline{\mathcal M}}
\newcommand{\eps}{\varepsilon}
\def\d{\partial}
\def\CP1{\mathbb{C}\mathrm{P}^1}
\newcommand{\tu}{\widetilde u}
\newcommand{\<}{\left <}
\renewcommand{\>}{\right >}
\newcommand{\of}{\overline f}
\newcommand{\cA}{{\mathcal A}}
\newcommand{\hcA}{{\widehat{\mathcal A}}}
\newcommand{\oh}{{\overline h}}
\newcommand{\hLambda}{{\widehat\Lambda}}
\newcommand{\im}{\mathop{\mathrm{im}}\nolimits}
\newcommand{\ch}{\mathop{\mathrm{ch}}\nolimits}
\newcommand{\sgn}{\mathop{\mathrm{sgn}}\nolimits}
\renewcommand{\deg}{\mathop{\mathrm{deg}}\nolimits}
\newcommand{\tF}{\widetilde F}
\newtheorem{theorem}{Theorem}[section]
\newtheorem{proposition}[theorem]{Proposition}
\newtheorem{lemma}[theorem]{Lemma}
\theoremstyle{definition}
\newtheorem{example}[theorem]{Example}
\newtheorem{remark}[theorem]{Remark}
\numberwithin{equation}{section}
\title{Dubrovin-Zhang hierarchy for the Hodge integrals}
\author{A. Buryak}
\address{A.~Buryak:\newline
Department of Mathematics,
ETH Zurich,\newline
Ramistrasse 101 8092, HG G27.1, Zurich, Switzerland, and\newline
Department of Mathematics, Moscow State University,\newline
Leninskie gory GSP-2 119992, Moscow, Russia} 
\email{buryaksh@gmail.com}
\subjclass[2010]{37K05, 14H10} 
\begin{document}

\begin{abstract}
In this paper we prove that the generating series of the Hodge integrals over the moduli space of stable curves is a solution of a certain deformation of the KdV hierarchy. This hierarchy is constructed in the framework of the Dubrovin-Zhang theory of the hierarchies of the topological type. It occurs that our deformation of the KdV hierarchy is closely related to the hierarchy of the Intermediate Long Wave equation.
\end{abstract}

\maketitle

\section{Introduction}

Let $\oM_{g,n}$ be the moduli space of stable complex algebraic curves with $n$ labelled marked points. The intersection theory of $\oM_{g,n}$ is closely related to the theory of integrable systems of partial differential equations. The basic result in this subject is the famous Witten conjecture (\cite{Wit91}) proved by M.~Kontsevich (see \cite{Kon92}). It tells the following. The class $\psi_i\in H^2(\oM_{g,n};\mbC)$ is defined as the first Chern class of the line bundle over $\oM_{g,n}$ formed by the cotangent lines at the $i$-th marked point. Intersection numbers $\<\tau_{k_1}\tau_{k_2}\ldots\tau_{k_n}\>_g$ are defined as follows:
$$
\<\tau_{k_1}\tau_{k_2}\ldots\tau_{k_n}\>_g:=\int_{\oM_{g,n}}\psi_1^{k_1}\psi_2^{k_2}\ldots\psi_n^{k_n}.
$$ 
Let us introduce variables $\hbar,t_0,t_1,t_2,\ldots$ and consider the generating series 
$$
F:=\sum_{\substack{g\ge 0,n\ge 1\\2g-2+n>0}}\frac{\hbar^g}{n!}\sum_{k_1,\ldots,k_n\ge 0}\<\tau_{k_1}\ldots\tau_{k_n}\>_gt_{k_1}\ldots t_{k_n}.
$$

Witten's conjecture, proved by M.~Kontsevich, says that the second derivative $\frac{\d^2 F}{\d t_0^2}$ is a solution of the KdV hierarchy. The first two equations of this hierarchy are
\begin{align*}
u_{t_1}&=uu_x+\frac{\hbar}{12}u_{xxx},\\
u_{t_2}&=\frac{1}{2}u^2u_x+\frac{\hbar}{12}(2u_xu_{xx}+uu_{xxx})+\frac{\hbar^2}{240}u_{xxxxx}.
\end{align*}
Here we identify $x$ with $t_0$. 

In this paper we study the Hodge integrals over the moduli space $\oM_{g,n}$: 
$$
\<\lambda_j\tau_{k_1}\ldots\tau_{k_n}\>_g:=\int_{\oM_{g,n}}\lambda_j\psi_1^{k_1}\psi_2^{k_2}\ldots\psi_n^{k_n},
$$
where $\lambda_j\in H^{2j}(\oM_{g,n};\mbC)$ is the $j$-th Chern class of the rank $g$ Hodge vector bundle over~$\oM_{g,n}$ whose fibers over smooth curves are the spaces of holomorphic one-forms. Consider the generating series 
$$
F^{Hodge}:=\sum_{\substack{g,n\ge 0\\2g-2+n>0}}\sum_{0\le j\le g}\frac{\hbar^g\eps^j}{n!}\sum_{k_1,\ldots,k_n\ge 0}\<\lambda_j\tau_{k_1}\ldots\tau_{k_n}\>_gt_{k_1}\ldots t_{k_n}.
$$

The main result of the paper is the following. In Section~\ref{subsection: deformed kdv} we construct a certain hamiltonian deformation of the KdV hierarchy. The first two equations of this hierarchy are
\begin{align}
u_{t_1}&=uu_x+\sum_{g\ge 1}\hbar^g\eps^{g-1}\frac{|B_{2g}|}{(2g)!}u_{2g+1},\label{eq:ILW equation}\\
u_{t_2}&=\frac{1}{2}u^2u_x+\sum_{g\ge 1}\frac{|B_{2g}|}{(2g)!}\hbar^g\frac{\eps^{g-1}}{4}(2(u u_{2g})_x+\d_x^{2g+1}(u^2))+\sum_{g\ge 2}\frac{|B_{2g}|}{(2g)!}\hbar^g\eps^{g-2}(g+1)u_{2g+1}.\notag
\end{align}
Here $B_{2g}$ are Bernoulli numbers: $B_2=\frac{1}{6},B_4=-\frac{1}{30},\ldots$; and we denote by $u_i$ the derivative~$\d_x^i u$. We call this hierarchy the deformed KdV hierarchy. Let 
\begin{gather}\label{eq:transformation}
\widetilde F^{Hodge}:=F^{Hodge}+\sum_{g\ge 1}\frac{(-1)^g}{2^{2g}(2g+1)!}\hbar^{g}\eps^g\frac{\d^{2g}F^{Hodge}}{\d t_0^{2g}}.
\end{gather}
\begin{theorem}\label{theorem: main theorem}
The series $\frac{\d^2\widetilde F^{Hodge}}{\d t_0^2}$ is a solution of the deformed KdV hierarchy. 
\end{theorem}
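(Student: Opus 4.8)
The plan is to relate $\widetilde F^{Hodge}$ to the Witten-Kontsevich potential $F$ by an explicit transformation, and then to recognise that transformation as exactly the quasi-Miura transformation that Dubrovin-Zhang theory attaches to the deformed KdV hierarchy of Section~\ref{subsection: deformed kdv}; once the two match, the theorem follows from the characterisation of the topological tau-function in DZ theory.

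\emph{Step 1: rewrite the Hodge classes via $\psi$ and $\kappa$.} Mumford's Grothendieck-Riemann-Roch computation gives that the even components of $\ch(\mathbb E)$ vanish and $\ch_{2\ell-1}(\mathbb E)=\frac{B_{2\ell}}{(2\ell)!}\big(\kappa_{2\ell-1}-\sum_i\psi_i^{2\ell-1}+\tfrac12\,\iota_*(\cdots)\big)$, where $\iota_*(\cdots)$ runs over the boundary divisors and involves the cotangent classes $\psi',\psi''$ at the two branches of the node. Exponentiating through Newton's identities writes $\sum_j\eps^j\lambda_j$ as an exponential in the $\ch_{2\ell-1}(\mathbb E)$. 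Substituting into $F^{Hodge}$ and using the standard dictionary that converts $\kappa$-insertions into $\psi$-insertions on moduli spaces with extra marked points (and converts the boundary terms $\iota_*$ through the splitting axiom), one obtains $F^{Hodge}$ from $F$ by Givental's $R(z)$-action with $R(z)=\exp\!\big(\sum_{\ell\ge 1}\frac{B_{2\ell}}{2\ell(2\ell-1)}\eps^{2\ell-1}z^{2\ell-1}\big)$: a triangular change of the variables $t_k$ produced by the $\psi_i^{2\ell-1}$-terms, together with a second-order ``quantum'' operator $\propto\hbar\,\d_{t_0}^2$ produced by the boundary terms. Composing further with the linear operator of \eqref{eq:transformation} --- which on $u=\d_{t_0}^2 F$ is the constant-coefficient Miura transformation $u\mapsto \frac{\sin(\tfrac12\sqrt{\hbar\eps}\,\d_x)}{\tfrac12\sqrt{\hbar\eps}\,\d_x}\,u$ --- yields an explicit transformation $\Phi$, trivial in the dispersionless limit, with $\d_{t_0}^2\widetilde F^{Hodge}=\Phi\big(\d_{t_0}^2 F\big)$.

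\emph{Step 2: identify $\Phi$ with the DZ data and conclude.} By construction the deformed KdV hierarchy of Section~\ref{subsection: deformed kdv} is a Hamiltonian deformation of dispersionless KdV with the same dispersionless limit as ordinary KdV, so DZ theory supplies a quasi-Miura transformation to the dispersionless hierarchy and a distinguished topological tau-function, characterised among tau-functions with that dispersionless limit by the string equation together with any one higher flow. It therefore suffices to check that $\d_{t_0}^2\widetilde F^{Hodge}$ satisfies the string equation --- immediate from the string equation for $F$ and the form of $\Phi$ --- and the single equation \eqref{eq:ILW equation}. For the latter one substitutes $\d_{t_0}^2\widetilde F^{Hodge}=\Phi(\d_{t_0}^2 F)$, uses that $\d_{t_0}^2 F$ obeys KdV together with the string, dilaton and KdV recursions, and reduces everything to a numerical identity among Bernoulli numbers; the coefficients in \eqref{eq:transformation} and in \eqref{eq:ILW equation} are precisely those that make the identity hold. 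Uniqueness of the topological solution in DZ theory then promotes this to all flows of the deformed KdV hierarchy, which is the statement of the theorem.

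I expect Step~1 to be the main obstacle: tracking the boundary contributions $\iota_*\big(\sum_{a+b=2\ell-2}(-1)^a(\psi')^a(\psi'')^b\big)$ in Mumford's formula, their interaction with the $\kappa\to\psi$ reduction, and the way $\eps$ bookkeeps the $\lambda$-degree while $\hbar$ bookkeeps the genus --- this is exactly what produces the quantum dispersive part of $\Phi$ and forces the precise coefficients of \eqref{eq:transformation}. A cleaner route for this step is to invoke the packaged statement that the Hodge partition function is the quantized $R(z)$-action on the Witten-Kontsevich one, bypassing the manual boundary bookkeeping; then the difficulty migrates to matching that quantized action with the DZ quasi-Miura transformation of the deformed hierarchy, which needs the explicit Hamiltonians and quasi-triviality transformation built in Section~\ref{subsection: deformed kdv}. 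A persistent secondary issue is normalisation: the DZ two-point function, the identification $x=t_0$, and checking that the gap between the DZ topological tau-function and the naive potential $F^{Hodge}$ is exactly the shift \eqref{eq:transformation}, and not merely of that shape.
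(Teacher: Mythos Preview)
Your plan has a circularity problem. You invoke ``uniqueness of the topological solution in DZ theory'' for the deformed KdV hierarchy in order to promote the single check of~\eqref{eq:ILW equation} to all flows. But the deformed KdV hierarchy is defined in Section~\ref{subsection: deformed kdv} only through Proposition~\ref{proposition: deformed KdV}, whose \emph{existence} part is itself proved in the paper by identifying it, after the Miura transformation, with the Dubrovin--Zhang hierarchy of the Hodge cohomological field theory --- that is, via Theorem~\ref{theorem: DZ hierarchy}. There is no independent construction of the commuting $\oh_n$, no independent tau-symmetry, and hence no independent ``topological solution'' of the deformed KdV hierarchy available prior to that identification; so you cannot appeal to DZ uniqueness for it. Separately, your reduction of the check of~\eqref{eq:ILW equation} to ``a numerical identity among Bernoulli numbers'' is asserted but not carried out; this is not routine, and in the paper the Bernoulli identities enter at a quite different point (Lemma~\ref{lemma: identity}, inside the commutation check of Proposition~\ref{proposition: pr2}), not in a direct verification of any flow on $\widetilde F^{Hodge}$.

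The paper runs the argument in the opposite direction. It starts from the DZ hierarchy of the classes~\eqref{Hodge classes}, for which $\frac{\d^2 F^{Hodge}}{\d t_0^2}$ is automatically a solution (Lemma~\ref{lemma: solution}), and then identifies that hierarchy with the deformed KdV hierarchy after the explicit Miura transformation~\eqref{eq:transformation2}. The identification uses: (i) a homogeneity/dimension count (Proposition~\ref{proposition: pr1}) that forces the Poisson operator to be constant-coefficient and $\oh_1$ to have the shape~\eqref{eq: transformed hamiltonian}, with the $\lambda_g$-formula producing the precise Miura map~\eqref{eq: Miura transformation}; (ii) a rigidity statement (Proposition~\ref{proposition: pr3}) that the existence of a commuting $\oh_2$ determines all $c_g$, $g\ge 3$, from $c_1,c_2$; and (iii) a direct check (Proposition~\ref{proposition: pr2}) that the Bernoulli choice $c_g=\frac{|B_{2g}|}{2(2g)!}$ does admit such an $\oh_2$. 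Theorem~\ref{theorem: main theorem} then follows from Theorem~\ref{theorem: DZ hierarchy} together with Lemmas~\ref{lemma: solution} and~\ref{lemma: transformed solution}. Your Givental/Mumford description of $F^{Hodge}$ is correct and does appear in the paper, but only in Appendix~\ref{section: coefficient of h2}, and only to compute the single coefficient $c_2$; it is never used to verify a flow on $\widetilde F^{Hodge}$ directly.
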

\noindent We remind the reader that we identify $x$ with $t_0$. 

Let us explain how to compute the series~$F^{Hodge}$ using this theorem. Since~$\oM_{0,3}$ is a point and $\int_{\oM_{1,1}}\lambda_1=\frac{1}{24}$, we have
$$
\left.F^{Hodge}\right|_{t_{\ge 1}=0}=\frac{t_0^3}{6}+\frac{\hbar\eps}{24}t_0.
$$
Therefore,
$$
\left.\frac{\d^2 \widetilde F^{Hodge}}{\d t_0^2}\right|_{t_{\ge 1}=0}=t_0.
$$
Using this equation as an initial condition for the deformed KdV hierarchy, Theorem~\ref{theorem: main theorem} allows to determine the series~$\frac{\d^2 \widetilde F^{Hodge}}{\d t_0^2}$. Note that the transformation~\eqref{eq:transformation} is invertible, one can check that
$$
F^{Hodge}=\widetilde F^{Hodge}+\sum_{g\ge 1}\frac{2^{2g-1}-1}{2^{2g-1}}\frac{|B_{2g}|}{(2g)!}\hbar^g\eps^g\frac{\d^{2g}\widetilde F^{Hodge}}{\d t_0^{2g}}.
$$
Therefore, using~$\frac{\d^2 \widetilde F^{Hodge}}{\d t_0^2}$ we can reconstruct~$\frac{\d^2 F^{Hodge}}{\d t_0^2}$. After that the string equation allows to determine~$F^{Hodge}$. This is the same argument as E.~Witten used in~\cite{Wit91} in order to reconstruct the series~$F$ from the second derivative~$\frac{\d^2 F}{\d t_0^2}$. 

\begin{remark}
In~\cite{Kaz09} M.~Kazarian proved that after a certain change of variables the series~$F^{Hodge}$ becomes a solution of the KP hierarchy. It seems to be interesting to relate his result to ours.
\end{remark}

Equation~\eqref{eq:ILW equation} coincides (after several rescalings) with the Intermediate Long Wave (ILW) equation (see e.g. \cite{SAK79}). We are very grateful to S. Ferapontov and D. Novikov for noticing this fact after the author's talk on the conference in Trieste (Hamiltonian PDEs, Frobenius manifolds and Deligne-Mumford moduli spaces, September 2013). An infinite sequence of conserved quantities of the ILW equation was constructed in \cite{SAK79}. We compare these conserved quantities with the Hamiltonians of our deformed KdV hierarchy in Section~\ref{section: ILW}.  

Our approach is based on the B. Dubrovin and Y. Zhang theory of the integrable hierarchies of the topological type. In \cite{DZ05} B. Dubrovin and Y. Zhang gave a construction of a bihamiltonian hierarchy associated to any conformal semisimple Frobenius manifold. They conjectured that the equations and the hamiltonian structures of this hierarchy are polynomial. In \cite{BPS12a} the authors suggested a more general construction of a hamiltonian hierarchy associated to an arbitrary semisimple cohomological field theory and proved the polynomiality of the equations and of the hamiltonian structure (see also~\cite{BPS12b}). One of the simplest examples of a cohomological field theory is the one formed by the Hodge classes
\begin{gather}\label{Hodge classes}
1+\eps\lambda_1+\eps\lambda_2+\ldots+\eps^g\lambda_g\in H^*(\oM_{g,n};\mbC).
\end{gather}  
The main step in the proof of Theorem~\ref{theorem: main theorem} is the application of the polynomiality theorem from \cite{BPS12a} to the Dubrovin-Zhang hierarchy associated to the cohomological field theory~\eqref{Hodge classes}. We also prove the following theorem.

\begin{theorem}\label{theorem: DZ hierarchy}
Consider the Dubrovin-Zhang hierarchy associated to the cohomological field theory \eqref{Hodge classes}. Then the Miura transformation 
\begin{gather}\label{eq:transformation2}
u\mapsto \widetilde u=u+\sum_{g\ge 1}\frac{(-1)^g}{2^{2g}(2g+1)!}\hbar^{g}\eps^g u_{2g}
\end{gather}
transforms this hierarchy to the deformed KdV hierarchy. 
\end{theorem}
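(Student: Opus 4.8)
The plan is to make the Dubrovin--Zhang hierarchy of \eqref{Hodge classes} explicit enough to apply the Miura transformation \eqref{eq:transformation2} term by term. By \cite{BPS12a}, since \eqref{Hodge classes} is a semisimple cohomological field theory, its Dubrovin--Zhang hierarchy is a Hamiltonian system with the standard Poisson operator $\d_x$ and polynomial Hamiltonian densities $\oh_d\in\hcA$; its topological solution is $u=\frac{\d^2 F^{Hodge}}{\d t_0^2}$, and on this solution $\oh_d$ reduces to the two-point function $\frac{\d^2 F^{Hodge}}{\d t_0\d t_{d+1}}$, which in turn determines $\oh_d$ as a differential polynomial in $u,u_1,u_2,\dots$. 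Any Miura transformation sends a Hamiltonian hierarchy to a Hamiltonian hierarchy, carrying the Poisson operator $K$ to $LKL^{\dagger}$, where $L$ is the Fr\'echet derivative of the transformation and $L^{\dagger}$ its formal adjoint, and carrying each Hamiltonian functional to the functional obtained by re-expressing it in the new variable. Hence it suffices to check that \eqref{eq:transformation2} sends $\d_x$ and the functionals $\int\oh_d\,dx$ to the Poisson operator and the Hamiltonians of the deformed KdV hierarchy constructed in Section~\ref{subsection: deformed kdv}.

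The Poisson operator is the easy part. The Fr\'echet derivative of \eqref{eq:transformation2} is the constant-coefficient operator $L=\un+\sum_{g\ge 1}\frac{(-1)^g}{2^{2g}(2g+1)!}\hbar^g\eps^g\d_x^{2g}$, which is even in $\d_x$ and therefore self-adjoint, so $\d_x$ is carried to $L^2\d_x$. Writing $L$ as the operator series $\frac{\sin(\tfrac12\sqrt{\hbar\eps}\,\d_x)}{\tfrac12\sqrt{\hbar\eps}\,\d_x}$ one recognizes $L^2\d_x$ as the Poisson operator of the deformed KdV hierarchy; this is a routine computation.

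The substance of the theorem lies in matching the Hamiltonians, and here I would compute the densities $\oh_d[u]$ directly from the genus expansion of $F^{Hodge}$. The geometric input is Mumford's Grothendieck--Riemann--Roch computation, which expresses the odd components $\ch_{2l-1}$ of the Chern character of the Hodge bundle through $\kappa$-classes and boundary pushforwards of $\psi$-monomials — this is the origin of the Bernoulli numbers in \eqref{eq:ILW equation}; equivalently, \eqref{Hodge classes} is obtained from the trivial (Witten--Kontsevich) theory by the Givental action of an $R$-matrix $R(z)$ whose logarithm is, up to the standard normalizations, $\sum_{g\ge 1}\frac{B_{2g}}{2g(2g-1)}\eps^{2g-1}z^{2g-1}$. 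Feeding this, together with the string and dilaton equations and the known Witten--Kontsevich two-point functions, into the genus-expansion (quasi-Miura) construction of \cite{DZ05,BPS12a}, one obtains each $\frac{\d^2 F^{Hodge}}{\d t_0\d t_{d+1}}$, and hence each $\oh_d$, as an explicit differential polynomial in $u$; substituting the inverse of \eqref{eq:transformation2} and simplifying must then reproduce the Hamiltonian densities recorded in Section~\ref{subsection: deformed kdv}. The main obstacle is exactly this last recognition step: showing that the Mumford-formula expansion, once shifted by \eqref{eq:transformation2}, collapses into the closed Bernoulli-number form.

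One can try to avoid evaluating all of the $\oh_d$. Assuming, as is standard for integrable hierarchies of this type, that the deformed KdV hierarchy is determined as a family of commuting Hamiltonian flows by its Poisson operator together with its dispersionless limit and its first nontrivial flow \eqref{eq:ILW equation} — the higher Hamiltonians being fixed recursively by commutativity with that flow — it then suffices to match the Poisson operators, as above, and to check that \eqref{eq:transformation2} carries the $t_1$-flow $u_{t_1}=\d_x\oh_0$ of the Dubrovin--Zhang hierarchy, with $\oh_0=\frac{\d^2 F^{Hodge}}{\d t_0\d t_1}$ evaluated via Mumford's formula as a differential polynomial in $u$, onto equation \eqref{eq:ILW equation}. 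In either route the geometric content is concentrated in the evaluation of a few low Hamiltonians of the Dubrovin--Zhang hierarchy, the rest being bookkeeping with differential polynomials. Finally, once Theorem~\ref{theorem: DZ hierarchy} is established, Theorem~\ref{theorem: main theorem} follows immediately: a direct differentiation shows that \eqref{eq:transformation2} acts on the topological solution $\frac{\d^2 F^{Hodge}}{\d t_0^2}$ exactly as the transformation \eqref{eq:transformation} acts on $F^{Hodge}$.
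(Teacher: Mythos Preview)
Your treatment of the Poisson operator contains a concrete error that derails the argument. The Dubrovin--Zhang hierarchy for the cohomological field theory~\eqref{Hodge classes} does \emph{not} have $\d_x$ as its Poisson operator. As recalled in Section~\ref{subsubsection: Poisson operator}, the operator $K$ is obtained from $\d_x$ by the quasi-Miura change from the genus-zero variable $v$ to the full variable $u$, and for the Hodge theory the paper computes it explicitly in Section~\ref{subsection: bracket}:
\[
K=\d_x+\sum_{g\ge 1}\hbar^g\eps^g\frac{(2g-1)|B_{2g}|}{(2g)!}\d_x^{2g+1}=(L^{-1})^2\d_x,
\]
with $L$ exactly the operator you wrote down. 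Thus the Miura transformation~\eqref{eq:transformation2} carries $K$ to $LKL^\dagger=\d_x$, which is the Poisson operator of the deformed KdV hierarchy, not the other way around. Your claim that the transformed operator is $L^2\d_x$ and that this is the deformed KdV Poisson operator is backwards on both counts: the deformed KdV hierarchy is defined with Poisson operator $\d_x$, and it is the DZ side that carries the nontrivial $K$. Establishing the correct form of $K$ is not a routine bookkeeping step; it requires the homogeneity argument of Section~\ref{subsection: bracket} together with the $\lambda_g$-conjecture input of Section~\ref{subsection: coefficient of he}.

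Your proposal for the Hamiltonians also stops short of an argument. In both routes you ultimately assert that the Mumford/Givental expansion, once shifted by~\eqref{eq:transformation2}, ``must'' collapse to the closed Bernoulli-number form, but you give no mechanism for this. The paper's mechanism is quite different from a direct evaluation: a dimension-count homogeneity constraint (Section~\ref{subsection: homogeneity}) forces the transformed $\oh_1$ into the shape $\int(\frac{\tu^3}{6}+\sum_{g\ge 1}\hbar^g\eps^{g-1}c_g\tu\tu_{2g})dx$ with unknown constants $c_g$; the constants $c_1,c_2$ are computed (the latter via the Givental deformation formula in Appendix~\ref{section: coefficient of h2}); Proposition~\ref{proposition: pr3} is a rigidity statement showing that the existence of a commuting $\oh_2$ pins down all $c_g$ for $g\ge 3$ from $c_1,c_2$; and Proposition~\ref{proposition: pr2} verifies by direct calculation that the Bernoulli numbers give a commuting pair. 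Only then does the uniqueness Lemma~\ref{lemma: uniqueness} finish the job for the higher $\oh_n$. Your ``second route'' does aim at a uniqueness reduction, which is the right instinct, but the crucial ingredient you are missing is this rigidity-plus-verification step that identifies the $c_g$ without computing all of $\oh_1$ from geometry.
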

One can see that the variable~$\tu$ is related to the variable~$u$ (eq.~\eqref{eq:transformation2}) in the same way as the series~$\tF^{Hodge}$ is related to the series~$F^{Hodge}$ (eq.~\eqref{eq:transformation}). This is so, because, as it will be explained in Section~\ref{section: reformulation}, Theorem~\ref{theorem: main theorem} is a consequence of Theorem~\ref{theorem: DZ hierarchy}.
 
\subsection{Organization of the paper}

In Section~\ref{section: deformed kdv} we give a construction of the deformed KdV hierarchy. The main statement here is Proposition~\ref{proposition: deformed KdV}. 

In Section~\ref{section: CohFT} we recall the Dubrovin-Zhang theory of the hierarchies of the topological type. 

In Section~\ref{section: reformulation} we formulate three propositions and show that Theorems \ref{theorem: main theorem}, \ref{theorem: DZ hierarchy} and Proposition~\ref{proposition: deformed KdV} follow from them. These propositions are proved in Sections~\ref{section: pr1}, \ref{section: pr2} and \ref{section: pr3} correspondingly. 

In Section~\ref{section: ILW} we compare the deformed KdV hierarchy with the hierarchy of the Intermediate Long Wave equation.

Appendix is devoted to the proof of several technical statements.

\subsection{Acknowledgements} 

We would like to thank S.~Shadrin, B.~Dubrovin, H.~Posthuma, M.~Kazarian and R.~Pandharipande for useful discussions. We also thank the anonymous referee for valuable remarks and suggestions that allowed us to improve the exposition of this paper.

The author was supported by grant ERC-2012-AdG-320368-MCSK in the group of R. Pandharipande at ETH Zurich, by a Vidi grant of the Netherlands Organization for Scientific Research, Russian Federation Government grant no. 2010-220-01-077 (ag. no. 11.634.31.0005), the grants RFFI 13-01-00755, NSh-4850.2012.1, the Moebius Contest Foundation for Young Scientists and "Dynasty" foundation.


\section{Deformed KdV hierarchy}\label{section: deformed kdv}

In this section we construct the deformed KdV hierarchy. First, in Section~\ref{subsection: hamiltonian PDEs} we recall basic facts about hamiltonian systems of partial differential equations. Then in Section~\ref{subsection: deformed kdv} we present a construction of the deformed KdV hierarchy. The main statement here is Proposition~\ref{proposition: deformed KdV}. It says that there exists a unique sequence of local functionals with certain properties. The uniqueness part is simple. It is based on Lemma~\ref{lemma: uniqueness} that is proved in Section~\ref{subsection: proof of uniqueness}. The proof of the existence part is presented in Section~\ref{section: reformulation}.


\subsection{Hamiltonian systems of PDEs}\label{subsection: hamiltonian PDEs}

Here we recall the hamiltonian formalism in the theory of partial differential equations. The material of this section is mostly borrowed from \cite{DZ05}.

\subsubsection{Differential polynomials and local functionals}

Consider variables $u,u_1,u_2,\ldots$. We will often denote $u$ by $u_0$ and use an alternative notation for the variables $u_1,u_2,\ldots$:
$$
u_x:=u_1,\quad u_{xx}:=u_2,\ldots.
$$
Let $\cA$ be the space of polynomials in the variables $u_s, s=1,2,\ldots$,
$$
f(u;u_x,u_{xx},\ldots)=\sum_{m\ge 0}\sum_{s_1,\ldots,s_m\ge 1}f^{s_1,s_2,\ldots,s_m}(u)u_{s_1}u_{s_2}\ldots u_{s_m}
$$
with the coefficients $f^{s_1,\ldots,s_m}(u)$ being power series in $u$. Such an expression will be called differential polynomial.

The operator $\d_x\colon\cA\to\cA$ is defined as follows:
$$
\d_x:=\sum_{s\ge 0}u_{s+1}\frac{\d}{\d u_s}.
$$ 
Let $\Lambda=\cA/\im(\d_x)$. We have the projection $\pi\colon\cA\to\cA/\im(\d_x)$. We will use the following notation:
$$
\int h dx:=\pi(h),
$$
for any $h\in\cA$. The elements of the space $\Lambda$ will be called local functionals. 

For a local functional $\oh=\int hdx\in\Lambda$, the variational derivative $\frac{\delta\oh}{\delta u}\in\cA$ is defined as follows:
$$
\frac{\delta\overline h}{\delta u}:=\sum_{i\ge 0}(-\d_x)^i\frac{\d h}{\d u_i}.
$$ 

Let us introduce a gradation~$\deg_{dif}$ on the ring $\cA$ of differential polynomials putting 
$$
\deg_{dif} u_k=k,\, k\ge 1;\quad \deg_{dif} f(u)=0.
$$
This gradation will be called differential degree. The gradation on $\cA$ induces the gradation on the space $\Lambda$. There is an important lemma (see e.g. \cite{DZ05}).
\begin{lemma}\label{lemma: variational derivative}
Let $f$ be an arbitrary differential polynomial such that $f|_{u_i=0}=0$. Then the local functional $\overline f=\int f dx$ is equal to zero, if and only if $\frac{\delta\overline f}{\delta u}=0$. 
\end{lemma}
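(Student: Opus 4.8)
The plan is to prove the two implications of the biconditional separately; only the converse carries any content. For the implication $\overline f=0\Rightarrow\frac{\delta\overline f}{\delta u}=0$ no hypothesis on $f$ is needed, since it is enough to check that the variational derivative annihilates $\im(\d_x)$. I would start from the commutation relation
$$
\frac{\d}{\d u_i}\circ\d_x=\d_x\circ\frac{\d}{\d u_i}+\frac{\d}{\d u_{i-1}}\qquad\Big(\text{with }\tfrac{\d}{\d u_{-1}}:=0\Big),
$$
which follows at once from $\d_x=\sum_{s\ge0}u_{s+1}\frac{\d}{\d u_s}$; substituting it into $\frac{\delta}{\delta u}\int\d_x h\,dx=\sum_{i\ge0}(-\d_x)^i\frac{\d}{\d u_i}(\d_x h)$ and shifting the summation index makes the sum telescope to zero.

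For the converse I would assume $\frac{\delta\overline f}{\delta u}=0$ and $f|_{u_i=0}=0$ and prove $f\in\im(\d_x)$ by induction on the order $n$ of $f$, i.e. the largest $n$ with $\frac{\d f}{\d u_n}\ne0$ (with $n=0$ if $f$ depends on $u$ alone). If $n=0$, then $\frac{\delta\overline f}{\delta u}=\frac{\d f}{\d u}=0$ makes $f$ constant, which together with $f|_{u_i=0}=0$ forces $f=0$. For $n\ge1$, the key step is to extract the $u_n$-dependence of $f$ from the highest-order part of $\frac{\delta\overline f}{\delta u}=\sum_{i=0}^n(-\d_x)^i\frac{\d f}{\d u_i}$: the summands with $i<n$ have order $<2n$, whereas a Leibniz computation shows that $(-\d_x)^n\frac{\d f}{\d u_n}$ is at most linear in $u_{2n}$, with coefficient $(-1)^n\frac{\d^2 f}{\d u_n^2}$. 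Hence $\frac{\d^2 f}{\d u_n^2}=0$, so $f=A\,u_n+B$ with $A,B\in\cA$ of order $\le n-1$. Then I lower the order by subtracting a total derivative: let $\widehat A\in\cA$ be a formal antiderivative of $A$ with respect to $u_{n-1}$ (it exists because $A$ is a polynomial in $u_{n-1}$ when $n\ge2$, and a power series in $u_0$ when $n=1$), so that $\d_x\widehat A=A\,u_n+(\text{terms of order}\le n-1)$ and hence $\widetilde f:=f-\d_x\widehat A$ has order $\le n-1$. Since $\d_x\widehat A\in\im(\d_x)$ we have $\int\widetilde f\,dx=\overline f$; by the implication just proved $\frac{\delta}{\delta u}\int\widetilde f\,dx=0$; and $(\d_x\widehat A)|_{u_i=0}=0$ because every monomial of a total derivative contains a factor $u_{s+1}$ with $s+1\ge1$, so $\widetilde f|_{u_i=0}=0$. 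The induction hypothesis now applies to $\widetilde f$ and yields $\int\widetilde f\,dx=0$, i.e. $\overline f=0$.

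The hard part is this inductive step, and the single load-bearing point inside it is that the vanishing of the variational derivative forces $f$ to be \emph{at most linear} in its top variable $u_n$ --- read off from the coefficient of $u_{2n}$ --- after which one integration in $u_{n-1}$ peels off the top order. Everything else (the Leibniz bookkeeping for that coefficient, and the verification that each reduction preserves both $\frac{\delta\overline f}{\delta u}=0$ and the normalization $f|_{u_i=0}=0$ while strictly dropping the order) is routine, which is presumably why the paper simply cites \cite{DZ05}. Alternatively one can invoke the homotopy operator of the variational bicomplex, which directly produces a primitive of any $f$ with $\frac{\delta\overline f}{\delta u}=0$; in that formulation the hypothesis $f|_{u_i=0}=0$ is exactly what removes the constant functions, which lie in $\ker\frac{\delta}{\delta u}$ but not in $\im(\d_x)$.
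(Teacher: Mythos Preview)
Your argument is correct. The paper does not actually prove this lemma; it merely states it as a known fact with a reference to \cite{DZ05}, so there is no proof in the paper to compare against, and your induction on the order of $f$ (extracting linearity in $u_n$ from the $u_{2n}$-coefficient of the variational derivative, then peeling off a total $x$-derivative) is the standard elementary route.
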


Let $\cA'\subset\cA$ be the subring of polynomials in $u,u_1,u_2,\ldots$. Sometimes we will use another gradation on the ring $\cA'$ assigning to $u_i, i\ge 0$, degree $1$. This second gradation will be just called degree. 

\subsubsection{Extended spaces}

Introduce a formal indeterminate $\hbar$ of the differential degree 
$$
\deg_{dif}\hbar=-2.
$$
Let $\widehat\cA:=\cA\otimes\mbC[[\hbar]]$ and $\hcA^{[k]}\subset\hcA$ be the subspace of elements of the total differential degree~$k$, $k\ge 0$. The space $\hcA^{[k]}$ consists of elements of the form
$$
f(u;u_1,u_2,\ldots;\hbar)=\sum_{i\ge 0}\hbar^i f_i(u;u_1,\ldots),\quad f_i\in\cA,\quad\deg_{dif}f_i=2i+k.
$$
The elements of the space $\hcA^{[k]}$ will be also called differential polynomials.

Let $\widehat\Lambda:=\Lambda\otimes\mbC[[\hbar]]$ and $\widehat\Lambda^{[k]}\subset\Lambda\otimes\mbC[[\hbar]]$ be the subspace of elements of the total differential degree~$k$. The space $\hLambda^{[k]}$ consists of integrals of the form
\begin{gather*}
\overline f=\int f(u;u_1,u_2,\ldots;\hbar)dx,\quad f\in\hcA^{[k]}.
\end{gather*}
They will also be called local functionals. 

\subsubsection{Hamiltonian systems of PDEs}

Let $K$ be a differential operator
\begin{gather}\label{eq: dif. operator}
K=\sum_{i,j\ge 0}f_{i,j}\hbar^i\d_x^j,
\end{gather}
where $f_{i,j}\in\cA$ and $\deg_{dif} f_{i,j}+j=2i+1$. Let us define the bracket $\{\cdot,\cdot\}_K\colon\widehat\Lambda^{[k]}\times\widehat\Lambda^{[l]}\to\widehat\Lambda^{[k+l+1]}$ by
\begin{gather*}
\{\overline g,\overline h\}_K:=\int\frac{\delta\overline g}{\delta u}K\frac{\delta\overline h}{\delta u}dx.
\end{gather*}

The operator $K$ is called Poisson, if the bracket $\{\cdot,\cdot\}_K$ is antisymmetric and satisfies the Jacobi identity. It is well-known that the operator $\d_x$ is Poisson (see e.g.~\cite{DZ05}).

A system of partial differential equations  
\begin{gather}\label{eq: system}
\frac{\d u}{\d t_i}=f_i(u;u_1,\ldots;\hbar),\quad i\ge 1,
\end{gather}
where $f_i\in\hcA^{[1]}$, is called hamiltonian, if there exists a Poisson operator $K$ and a sequence of local functionals $\overline h_i\in\widehat\Lambda^{[0]}$, $i\ge 1$, such that
\begin{align*}
&f_i=K\frac{\delta\oh_i}{\delta u},\\
&\{\oh_i,\oh_j\}_K=0,\quad\text{for $i,j\ge 1$}.
\end{align*}
The local functionals $\oh_i$ are called the Hamiltonians of the system \eqref{eq: system}. 

\subsubsection{Miura transformations}

Let us recall the Miura group action on hamiltonian hierarchies. 

Consider transformations of the form
\begin{gather}\label{eq: Miura group}
u\mapsto\widetilde u=u+\sum_{k\ge 1}\hbar^k f_k(u;u_1,\ldots,u_{2k}),\quad f_k\in\mathcal A,\quad \deg_{dif} f_k=2k.
\end{gather}
It is easy to see that transformations \eqref{eq: Miura group} form a group which is called the Miura group.

Let us define the Miura group action on hamiltonian hierarchies. Given a transformation~\eqref{eq: Miura group}, any differential polynomial from $\hcA^{[0]}$ can be rewritten in the variables $\widetilde u_i$. This defines the Miura group action on $\hcA^{[0]}$ and on $\widehat\Lambda^{[0]}$. The action on Poisson operators is defined as follows:
$$
K\mapsto\widetilde K=\left(\sum_{p\ge 0}\frac{\d\widetilde u}{\d u_p}\d_x^p\right)\circ K\circ\left(\sum_{q\ge 0}(-\d_x)^q\circ\frac{\d\widetilde u}{\d u_q}\right).
$$ 

The Miura group action transforms solutions of hamiltonian hierarchies in the following way (see e.g. \cite{DZ05}).
\begin{lemma}\label{lemma: transformed solution}
Suppose we have a Poisson operator $K$ and a sequence of commuting local functionals $\oh_n\in\hLambda^{[0]}$: $\{\oh_n,\oh_m\}_K=0$. Let $u(x,t_1,\ldots;\hbar)$ be a solution of the corresponding hierarchy of PDEs: $\frac{\d u}{\d t_n}=K\frac{\delta\oh_n}{\delta u}$. Consider a Miura transformation \eqref{eq: Miura group}. Then the series $\tu(x,t_1,\ldots;\hbar)$ is a solution of the transformed hierarchy: $\frac{\d\tu}{\d t_n}=\widetilde K\frac{\delta\oh_n}{\delta\tu}$.
\end{lemma}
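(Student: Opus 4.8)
The plan is to reduce Lemma~\ref{lemma: transformed solution} to two elementary chain rules — one for variational derivatives under the change of variables~\eqref{eq: Miura group}, and one for $t_n$-derivatives along the flow — and then to read off the conclusion by comparing with the definition of $\widetilde K$.

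First I would record the behaviour of variational derivatives. Set $L:=\sum_{p\ge0}\frac{\d\tu}{\d u_p}\d_x^p$, the linearization of the Miura map. Because $\frac{\d\tu}{\d u_p}=\delta_{p,0}+\sum_{k\ge1}\hbar^k\frac{\d f_k}{\d u_p}$ and $\frac{\d f_k}{\d u_p}$ vanishes for $p>2k$, at each order in $\hbar$ only finitely many terms contribute, so $L$ is a well-defined operator on $\hcA$; its formal adjoint for the pairing $(a,b)\mapsto\int ab\,dx$ is, after integration by parts on $\Lambda=\cA/\im\d_x$, exactly $L^\dagger:=\sum_{q\ge0}(-\d_x)^q\circ\frac{\d\tu}{\d u_q}$, so that by definition $\widetilde K=L\circ K\circ L^\dagger$. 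A preliminary point is that rewriting a differential polynomial in the variables $\tu_i$ is legitimate because~\eqref{eq: Miura group} is invertible: solving recursively in powers of $\hbar$ produces $u=\tu+\sum_{k\ge1}\hbar^k g_k(\tu;\tu_1,\ldots)$ with $\deg_{dif}g_k=2k$. Then for any local functional $\oh\in\hLambda^{[0]}$, an infinitesimal variation $\delta u$ of $u$ induces $\delta\tu=L\,\delta u$, so from $\delta\oh=\int\frac{\delta\oh}{\delta\tu}\,\delta\tu\,dx=\int\frac{\delta\oh}{\delta\tu}\,(L\,\delta u)\,dx=\int\bigl(L^\dagger\tfrac{\delta\oh}{\delta\tu}\bigr)\delta u\,dx$ together with the identity $\delta\oh=\int\frac{\delta\oh}{\delta u}\,\delta u\,dx$ I would read off the chain rule $\frac{\delta\oh}{\delta u}=L^\dagger\frac{\delta\oh}{\delta\tu}$, using that the first variation pins down $\frac{\delta\oh}{\delta u}$ uniquely (cf.\ Lemma~\ref{lemma: variational derivative} and~\cite{DZ05}).

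Second, if $u(x,t_1,\ldots;\hbar)$ solves $\frac{\d u}{\d t_n}=f_n$ with $f_n=K\frac{\delta\oh_n}{\delta u}$, then differentiating the substitution $\tu=\tu(u;u_1,\ldots)$ in $t_n$ and commuting $\d_{t_n}$ with $\d_x$ gives $\frac{\d\tu}{\d t_n}=\sum_{q\ge0}\frac{\d\tu}{\d u_q}\d_x^q\frac{\d u}{\d t_n}=L\,f_n$. Assembling the two steps, $\frac{\d\tu}{\d t_n}=L\,f_n=L\,K\,\frac{\delta\oh_n}{\delta u}=L\,K\,L^\dagger\frac{\delta\oh_n}{\delta\tu}=\widetilde K\frac{\delta\oh_n}{\delta\tu}$, which is exactly the assertion; running the same manipulation on the bracket (and using that $L^\dagger$ is the adjoint of $L$) also gives $\{\oh_n,\oh_m\}_{\widetilde K}=\{\oh_n,\oh_m\}_K=0$, so the transformed system is indeed a hamiltonian hierarchy. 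The whole argument is routine; the only place asking for a little care is the first step — pinning down that $L^\dagger$ is genuinely the adjoint of $L$, and handling the constant ambiguity in Lemma~\ref{lemma: variational derivative} — so that the resulting transformation law for $\frac{\delta\oh}{\delta u}$ matches the prescribed formula for $\widetilde K$ on the nose.
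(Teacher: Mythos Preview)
Your proof is correct. The paper does not supply its own proof of this lemma; it is stated as a known fact with a reference to~\cite{DZ05}. The argument you give --- the two chain rules via the Fr\'echet derivative $L=\sum_{p\ge0}\frac{\d\tu}{\d u_p}\d_x^p$ and its formal adjoint $L^\dagger$, assembled with the definition $\widetilde K=LKL^\dagger$ --- is exactly the standard proof one would find there, so there is nothing to compare.
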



\subsection{Deformed KdV hierarchy}\label{subsection: deformed kdv}

In this section we give a construction of a deformation of the KdV hierarchy.

\begin{proposition}\label{proposition: deformed KdV}
Let $\eps$ be any complex number. There exists a unique sequence of local functionals $\oh_n\in\hLambda^{[0]}, n\ge 1$, such that
\begin{align}
&\oh_1=\int\left(\frac{u^3}{6}+\sum_{g\ge 1}\hbar^g\eps^{g-1}\frac{|B_{2g}|}{2(2g)!}u u_{2g}\right)dx,\label{eq: h1}\\
&\oh_n=\int\left(\frac{u^{n+2}}{(n+2)!}+O(\hbar)\right)dx,\quad\text{for $n\ge 2$},\notag\\
&\{\oh_i,\oh_j\}_{\d_x}=0,\quad\text{for $i,j\ge 1$}.\notag
\end{align}
\end{proposition}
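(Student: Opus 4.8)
I would treat the uniqueness and existence parts separately; the uniqueness is elementary, while the existence carries the real content.

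\textbf{Uniqueness.} For $n=1$ there is nothing to prove, so I fix $n\ge 2$ and expand a candidate sequence in powers of $\hbar$, writing $\oh_n=\sum_{g\ge 0}\hbar^g\oh_{n,g}$ with $\oh_{n,g}\in\Lambda$ of differential degree $2g$. The stated conditions already fix $\oh_{n,0}=\int\frac{u^{n+2}}{(n+2)!}dx$ and fix $\oh_{1,g}$ for every $g\ge 0$. I would then induct on $g$: assuming $\oh_{n,g'}$ known for all $g'<g$, the coefficient of $\hbar^g$ in $\{\oh_n,\oh_1\}_{\d_x}=0$ reads
\[
\{\oh_{n,g},\oh_{1,0}\}_{\d_x}=-\{\oh_{n,0},\oh_{1,g}\}_{\d_x}-\sum_{g_1+g_2=g,\ 0<g_1<g}\{\oh_{n,g_1},\oh_{1,g_2}\}_{\d_x},
\]
and the right-hand side is already determined. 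Since $\oh_{1,0}=\int\frac{u^3}{6}dx$ and $\oh_{n,g}$ has differential degree $2g>0$, Lemma~\ref{lemma: uniqueness} applies and recovers $\oh_{n,g}$ uniquely, closing the induction. (The point of Lemma~\ref{lemma: uniqueness} is that $\of\mapsto\{\of,\oh_{1,0}\}_{\d_x}$ is injective on local functionals of positive differential degree: its kernel is the space of conserved quantities of the dispersionless flow $u_t=uu_x$, and, via Lemma~\ref{lemma: variational derivative}, that flow has no conserved density which is a differential polynomial of positive differential degree.)

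\textbf{Existence.} Rather than constructing the $\oh_n$ by hand, I would produce them from the Dubrovin--Zhang hierarchy of the cohomological field theory \eqref{Hodge classes}. This is a rank-one, semisimple cohomological field theory whose Frobenius algebra is the trivial one (as $\lambda_j$ vanishes on $\oM_{0,n}$), so the construction and polynomiality theorem of \cite{BPS12a} equip it with a polynomial first Poisson operator $P$ and polynomial commuting Hamiltonians $\overline g_n\in\hLambda^{[0]}$, $\{\overline g_i,\overline g_j\}_P=0$, all computable from one-variable Hodge integrals. The steps would then be: (i) compute $P$ and $\overline g_1$ explicitly enough to see that the Miura transformation \eqref{eq:transformation2} carries $P$ to $\d_x$ and $\overline g_1$ to the right-hand side of \eqref{eq: h1}; (ii) check that the images $\widetilde{\overline g}_n$ of the remaining Hamiltonians have leading term $\int\frac{u^{n+2}}{(n+2)!}dx$, which modulo $\hbar$ amounts to the genus-zero fact that \eqref{Hodge classes} coincides with the trivial cohomological field theory, so that the dispersionless limit of everything is the dispersionless KdV hierarchy; (iii) invoke the invariance of the Poisson bracket under Miura transformations (the Poisson-bracket counterpart of Lemma~\ref{lemma: transformed solution}) to get $\{\widetilde{\overline g}_i,\widetilde{\overline g}_j\}_{\d_x}=\{\overline g_i,\overline g_j\}_P=0$. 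Then $\oh_n:=\widetilde{\overline g}_n$ is the desired sequence; this is essentially the argument organized through the propositions of Section~\ref{section: reformulation}.

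The main obstacle is step (i): pinning down the first Poisson structure and first Hamiltonian of the Dubrovin--Zhang hierarchy of \eqref{Hodge classes} in closed form and recognizing the coefficients $\frac{|B_{2g}|}{(2g)!}$ and $\frac{(-1)^g}{2^{2g}(2g+1)!}$, which originate from the standard evaluations of $\lambda_g$-type Hodge integrals (equivalently, from the quasi-triviality \eqref{eq:transformation} of the Hodge potential). This is where genuine input enters — the polynomiality theorem of \cite{BPS12a} applied to \eqref{Hodge classes} together with actual Hodge-integral computations on $\oM_{g,n}$; all the rest is formal manipulation of differential polynomials. An alternative, more hands-on but (I expect) less efficient route would start from the fact that the Hamiltonian flow of \eqref{eq: h1} is precisely the ILW equation \eqref{eq:ILW equation}, take the infinite sequence of ILW conserved quantities from \cite{SAK79}, and verify directly their locality, normalization and pairwise commutation — which is close to the comparison carried out in Section~\ref{section: ILW}.
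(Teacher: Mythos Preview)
Your uniqueness argument is correct and is exactly the paper's: it is Lemma~\ref{lemma: uniqueness} (whose proof is Lemma~\ref{lemma: tmp}) unfolded order by order in $\hbar$.

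Your existence outline (take the Dubrovin--Zhang hierarchy of \eqref{Hodge classes}, Miura-transform to standard form, read off the Hamiltonians) is the paper's strategy, but your step (i) does \emph{not} match what the paper actually does, and as stated it is where the gap lies. You expect to identify the full Hamiltonian $\oh_1$---in particular the entire sequence of coefficients $\frac{|B_{2g}|}{2(2g)!}$---by direct ``$\lambda_g$-type'' Hodge-integral evaluations. The paper does not do this, and it is not clear how one would: after the Miura transformation the coefficient $c_g$ of $\hbar^g\eps^{g-1}\tu\tu_{2g}$ in $\oh_1$ involves the $\lambda_{g-1}$-integrals together with the full nonlinear change of variables from $t_*$ to $u_*$ implicit in the polynomiality theorem, not just the $\lambda_g$-formula \eqref{eq: lambdag}. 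The $\lambda_g$-conjecture is used, but only to control the top-degree-in-$\eps$ terms (Section~\ref{subsection: coefficient of he}), which is exactly what the Miura transformation then kills.

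What the paper does instead is a bootstrap. A homogeneity argument (Section~\ref{subsection: homogeneity}) shows that after the Miura transformation $\oh_1$ necessarily has the shape $\int(\frac{\tu^3}{6}+\sum_{g\ge 1}\hbar^g\eps^{g-1}c_g\tu\tu_{2g})\,dx$ with a~priori unknown constants $c_g$; only $c_1=\frac{1}{24}$ (from the $\eps=0$ KdV limit) and $c_2=\frac{1}{1440}$ (via a Givental-operator computation, Appendix~\ref{section: coefficient of h2}) are computed directly. The identification of the remaining $c_g$ is indirect: Proposition~\ref{proposition: pr3} shows that the mere existence of a commuting $\oh_2$ with leading term $\int\frac{u^4}{24}dx$ determines all $c_g$, $g\ge 3$, from $c_1,c_2$; Proposition~\ref{proposition: pr2} checks by a Bernoulli-number identity (Lemma~\ref{lemma: identity}) that the choice $c_g=\frac{|B_{2g}|}{2(2g)!}$ admits such an $\oh_2$. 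Combining the two, the Dubrovin--Zhang coefficients must be the Bernoulli ones. This two-proposition maneuver is the substantive step your sketch elides; the sentence ``this is essentially the argument organized through the propositions of Section~\ref{section: reformulation}'' is correct, but your description of step~(i) is not what those propositions accomplish.
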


The hamiltonian system of partial differential equations corresponding to the sequence of local functionals $\oh_n$ and the Poisson operator $\d_x$ will be called the deformed KdV hierarchy.

The uniqueness statement in Proposition~\ref{proposition: deformed KdV} is a consequence of the following simple lemma that will be proved in the next section.
\begin{lemma}\label{lemma: uniqueness}
Let us fix a local functional $\oh\in\hLambda^{[0]}$ of the form $\oh=\int\left(\frac{u^3}{6}+O(\hbar)\right)dx$. Consider also an arbitrary power series $q_0(u)$. Suppose there exists a local functional $\overline q\in\hLambda^{[0]}$ of the form $\overline q=\int\left(q_0(u)+O(\hbar)\right)dx$, such that $\{\oh,\overline q\}_{\d_x}=0$. Then the local functional $\overline q$ is uniquely determined by $\oh$ and $q_0(u)$.
\end{lemma}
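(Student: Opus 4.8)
The plan is to argue by induction on the power of $\hbar$, using the grading $\deg_{dif}\hbar=-2$ together with Lemma~\ref{lemma: variational derivative} to pin down each coefficient uniquely. Write $\overline q=\sum_{g\ge 0}\hbar^g\overline q_g$ with $\overline q_g=\int q_g\,dx$ and $q_g\in\cA$, $\deg_{dif}q_g=2g$; similarly expand $\oh=\sum_{g\ge 0}\hbar^g\overline h_g$ with $\overline h_0=\int\frac{u^3}{6}dx$. Suppose we are given two local functionals $\overline q$ and $\overline q\,'$ both satisfying the hypotheses with the same $\oh$ and the same $q_0(u)$; set $\overline r:=\overline q-\overline q\,'=\sum_{g\ge 1}\hbar^g\overline r_g$, which vanishes at order $\hbar^0$ by assumption. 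The goal is to show $\overline r=0$, i.e. $\overline r_g=0$ for all $g\ge 1$.

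First I would record the structure of the bracket $\{\oh,\cdot\}_{\d_x}$ at leading order in $\hbar$. Since $\oh_0=\int\frac{u^3}{6}dx$ has $\frac{\delta\oh_0}{\delta u}=\frac{u^2}{2}$, the operator $\overline f\mapsto\{\oh_0,\overline f\}_{\d_x}=\int\frac{\delta\oh_0}{\delta u}\,\d_x\frac{\delta\overline f}{\delta u}\,dx$ is, up to sign and integration by parts, the dispersionless KdV flow $\overline f\mapsto\int\bigl(\frac{u^2}{2}\bigr)_x\frac{\delta\overline f}{\delta u}\,dx$; equivalently $\frac{\delta}{\delta u}\{\oh_0,\overline f\}_{\d_x}$ can be computed explicitly. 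The key point I want is an \emph{injectivity} statement: if $\overline f=\int f\,dx$ with $f|_{u_i=0}=0$, $\deg_{dif}f=2g$ for some fixed $g\ge 1$, and $\{\oh_0,\overline f\}_{\d_x}=0$, then $\overline f=0$. This is the heart of the matter and I expect it to be the main obstacle: one has to show that the only homogeneous (in $\deg_{dif}$, with $u_i=0$ killing constants) local functionals in the kernel of the leading bracket are the Casimirs, which in this positive-degree range are trivial. I would prove it by passing to variational derivatives and using that $\{\oh_0,\overline f\}_{\d_x}=0$ forces $\d_x\frac{\delta\overline f}{\delta u}$ to be orthogonal (in the appropriate pairing) to everything, or more concretely by the standard fact that the dispersionless KdV hierarchy's symmetry algebra in each $\deg_{dif}$-homogeneous component is one-dimensional and detected by the $q_0(u)$-term; since that term is fixed, the difference must lie in the kernel and hence vanish by a degree count.

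Granting this, the induction is immediate. At order $\hbar^g$ the equation $\{\oh,\overline q\}_{\d_x}=0$ reads $\{\oh_0,\overline q_g\}_{\d_x}+\sum_{j=1}^{g}\{\oh_j,\overline q_{g-j}\}_{\d_x}=0$, so $\{\oh_0,\overline q_g\}_{\d_x}$ is determined by $\oh$ and by $\overline q_0,\ldots,\overline q_{g-1}$. Subtracting the same identity for $\overline q\,'$ and using $\overline r_0=0$ together with the inductive hypothesis $\overline r_1=\cdots=\overline r_{g-1}=0$, we get $\{\oh_0,\overline r_g\}_{\d_x}=0$. Now $r_g$ has $\deg_{dif}r_g=2g$ and, after subtracting a suitable constant which contributes nothing to the local functional, we may assume $r_g|_{u_i=0}=0$; the injectivity statement above then gives $\overline r_g=0$, closing the induction. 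Hence $\overline r=0$, so $\overline q$ is uniquely determined by $\oh$ and $q_0(u)$, and similarly, reading the argument at order $\hbar^0$ shows $\overline q_0$ is fixed by $q_0(u)$ alone (it must solve $\{\oh_0,\overline q_0\}_{\d_x}=0$ with prescribed $q_0(u)$-term, which has a unique solution), completing the proof.
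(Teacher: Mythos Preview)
Your reduction is exactly the paper's: take the difference $\overline r=\overline q-\overline q'$, pass to the lowest order in $\hbar$ where it is nonzero, and conclude from $\bigl\{\int\frac{u^3}{6}dx,\overline r_{g}\bigr\}_{\d_x}=0$ that $\overline r_g=0$. The genuine gap is the step you yourself flag as the main obstacle, namely the injectivity of $\overline f\mapsto\bigl\{\int\frac{u^3}{6}dx,\overline f\bigr\}_{\d_x}$ on local functionals of a fixed positive differential degree. Neither of your proposed arguments is a proof: ``passing to variational derivatives and orthogonality'' does not by itself rule out a nontrivial kernel, and the ``standard fact'' about the symmetry algebra of dispersionless KdV is essentially the statement you are trying to establish, so invoking it is circular.

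The paper supplies this injectivity as a separate lemma (Lemma~\ref{lemma: tmp}) with a concrete combinatorial proof. One rewrites the vanishing of the bracket as the condition that $[uu_x,p]:=\sum_{s\ge 0}\bigl(\d_x^s(uu_x)\,\tfrac{\d p}{\d u_s}-\d_x^s p\cdot\tfrac{\d(uu_x)}{\d u_s}\bigr)$ lies in $\im\d_x$. Ordering monomials $f(u)\prod_{k\ge 1} u_k^{\alpha_k}$ lexicographically, a direct computation gives
\[
[uu_x,\,f(u)\textstyle\prod_k u_k^{\alpha_k}]=\bigl(\textstyle\sum_k(k+1)\alpha_k-\alpha_1-1\bigr)\,f(u)\,u_x\prod_k u_k^{\alpha_k}+\text{(lex.\ lower terms)},
\]
and the leading coefficient is strictly positive once $\deg_{dif}p\ge 2$. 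Since the leading monomial of any element of $\im\d_x$ has its top index appearing with exponent~$1$ (and that index is $\ge 2$), the leading monomial of $p$ must satisfy $m\ge 2$ and $\alpha_m=1$; one then subtracts from $p$ the total derivative $\d_x\bigl(\frac{1}{\alpha_{m-1}+1}f(u)\bigl(\prod_{k\le m-2}u_k^{\alpha_k}\bigr)u_{m-1}^{\alpha_{m-1}+1}\bigr)$, which strictly lowers the leading monomial, and iterates to conclude $p\in\im\d_x$. With this lemma in hand your induction goes through verbatim. (Your remark about subtracting a constant is unnecessary: a differential polynomial homogeneous of differential degree $2g\ge 2$ already vanishes when all $u_i$ are set to zero.)
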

We thank B. Dubrovin for telling us about~Lemma~\ref{lemma: uniqueness}.

The proof of the existence part of Proposition~\ref{proposition: deformed KdV} is presented in Section~\ref{section: reformulation}.


\subsection{Proof of Lemma \ref{lemma: uniqueness}}\label{subsection: proof of uniqueness}

The proof is based on the following lemma.  
\begin{lemma}\label{lemma: tmp}
Let $p(u;u_1,u_2,\ldots)$ be an arbitrary homogeneous differential polynomial of positive differential degree. Suppose
$\left\{\int p dx,\int\frac{u^3}{6}dx\right\}_{\d_x}=0$, then $\int p dx=0$.
\end{lemma}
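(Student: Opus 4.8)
The plan is to compute the bracket $\left\{\int p\,dx,\int\frac{u^3}{6}\,dx\right\}_{\d_x}$ explicitly and exploit the differential-degree grading. Since $\frac{\delta}{\delta u}\int\frac{u^3}{6}\,dx=\frac{u^2}{2}$, the bracket equals $\int \frac{\delta(\int p\,dx)}{\delta u}\cdot\d_x\!\left(\frac{u^2}{2}\right)dx=\int \frac{\delta(\int p\,dx)}{\delta u}\cdot u u_x\,dx$. Because $p$ is a homogeneous differential polynomial of differential degree $d>0$, its variational derivative $\frac{\delta(\int p\,dx)}{\delta u}$ is again a differential polynomial, homogeneous of differential degree $d$ (here I use that $\d_x$ raises $\deg_{dif}$ by $1$ and $\d/\d u_i$ lowers it by $i$, while $\d/\d u_0$ preserves it); multiplying by $uu_x$ raises the degree to $d+1$, and integrating preserves it. So the vanishing hypothesis is an identity between local functionals of a fixed positive differential degree $d+1$.

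First I would isolate the leading behaviour in the second (non-differential) grading, i.e. the grading on $\cA'$ assigning degree $1$ to each $u_i$. Write $p=\sum_{m\ge 1} p_m$ where $p_m$ is the part of $p$ that is homogeneous of degree $m$ in this second grading (the $m=0$ part contributes nothing to $\int p\,dx$ modulo $\im\d_x$ in the relevant sense, or more precisely only through terms killed by the variational derivative). The key point is that the bracket is compatible with this second grading in the sense that the degree-$(m+2)$ component of $\left\{\int p\,dx,\int\frac{u^3}{6}\,dx\right\}_{\d_x}$ depends only on $p_m$. Hence the hypothesis forces $\left\{\int p_m\,dx,\int\frac{u^3}{6}\,dx\right\}_{\d_x}=0$ for every $m$ separately, and it suffices to treat each homogeneous piece $p_m$.

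Now fix $m$ and suppose $\int p_m\,dx\ne 0$ with $\deg_{dif}p_m=d>0$; I want a contradiction. Using integration by parts I may assume $p_m$ is written in a normalized form, e.g. with no pure $\d_x$-derivatives factored out, so that $\int p_m\,dx=0$ iff $p_m$ reduces to an element of $\im\d_x$ iff (by Lemma \ref{lemma: variational derivative}, since $p_m|_{u_i=0}=0$ as $d>0$ and $m\ge 1$) its variational derivative vanishes. Let $Q:=\frac{\delta(\int p_m\,dx)}{\delta u}$, a nonzero homogeneous differential polynomial of degree $m-1$ and differential degree $d$. The equation to analyze is $\int Q\cdot u u_x\,dx=0$, equivalently $\frac{\delta}{\delta u}\int Q u u_x\,dx=0$. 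I expect the main obstacle to be extracting enough information from this single scalar equation to conclude $Q=0$: the cleanest route is to pick out the "top" monomial of $Q$ in a suitable term order — say the lexicographically largest monomial $c\,u_{i_1}\cdots u_{i_{m-1}}$ with $i_1\ge i_2\ge\cdots$ — and observe that in $\frac{\delta}{\delta u}\int Qu u_x\,dx$ there is a uniquely determined monomial (coming from differentiating that top monomial and transporting all derivatives off the factor $uu_x$) whose coefficient is a nonzero multiple of $c$ and which cannot be cancelled by contributions from any lower monomial of $Q$. This forces $c=0$, a contradiction, so $Q=0$, hence $\int p_m\,dx=0$ for all $m$, hence $\int p\,dx=0$. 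An alternative, possibly slicker, argument: the operator $\ad_{\int u^3/6\,dx}=uu_x\cdot$ followed by $\delta/\delta u$ is, on the space of differential polynomials graded by $\deg_{dif}$, triangular with respect to a grading by the number of variables and its "diagonal part" is injective on the relevant homogeneous pieces — one could package the monomial computation above as such an injectivity statement. I would carry out whichever of these is less computation-heavy in practice; the monomial bookkeeping is the part I expect to require genuine care.
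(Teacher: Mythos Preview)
Your overall strategy---reduce to a homogeneous piece and run a leading--monomial argument in a term order---is the right idea, and the paper does exactly this. However, the route you take, via $Q=\frac{\delta}{\delta u}\int p\,dx$ and the condition $\frac{\delta}{\delta u}\int Q\,uu_x\,dx=0$, inserts an unnecessary layer: you now have to track a leading monomial through \emph{two} variational derivatives (one hidden in $Q$, one applied to $Quu_x$), and the assertion that the top monomial of $Q$ produces a non-cancellable term in $\frac{\delta}{\delta u}\int Q\,uu_x\,dx$ is exactly the substantive computation you have not done. The constraint that $Q$ lie in the image of $\delta/\delta u$ is also needed and you do not explain how to use it.

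The paper avoids all of this by working at the level of $p$ itself. It introduces the commutator of evolutionary vector fields,
\[
[q,r]=\sum_{s\ge 0}\Bigl((\partial_x^s q)\frac{\partial r}{\partial u_s}-(\partial_x^s r)\frac{\partial q}{\partial u_s}\Bigr),
\]
and checks the identity $\int[uu_x,p]\,dx=\bigl\{\int p\,dx,\int\frac{u^3}{6}\,dx\bigr\}_{\partial_x}$. Thus the hypothesis becomes: $[uu_x,p]\in\im\partial_x$. The point is that $[uu_x,\cdot]$ acts very simply on monomials: for $f(u)\prod_{k=1}^m u_k^{\alpha_k}$ one has
\[
[uu_x,\,f(u)\textstyle\prod u_k^{\alpha_k}]=\Bigl(\sum_{k=1}^m (k+1)\alpha_k-\alpha_1-1\Bigr)\,f(u)\,u_x\textstyle\prod u_k^{\alpha_k}+\text{(lower in lex order)},
\]
with a coefficient that is strictly positive once $\deg_{dif}\ge 2$. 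From this one reads off that the lex-leading monomial of $p$ must satisfy $m\ge 2$ and $\alpha_m=1$, which is precisely the shape of a leading monomial of a $\partial_x$-derivative; subtracting that $\partial_x$-derivative lowers the leading monomial, and induction finishes. This is the ``injectivity of the diagonal part'' you allude to at the end, but realized at the level of $[uu_x,p]$ rather than through $Q$, which is what makes the bookkeeping trivial.

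So: your plan is salvageable, but as written the crucial monomial step is a genuine gap, and the clean way to fill it is to switch from the variational derivative of $Quu_x$ to the commutator $[uu_x,p]$.
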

\begin{proof}
If $\deg_{dif} p=1$, then automatically $\int p dx=0$. Suppose $\deg_{dif} p\ge 2$. Define the bracket~$[\cdot,\cdot]$ on differential polynomials as follows:
$$
[q,r]:=\sum_{s\ge 0}\left((\d_x^s q)\frac{\d r}{\d u_s}-(\d_x^s r)\frac{\d q}{\d u_s}\right).
$$
We have 
\begin{multline*}
\int [uu_x,p] dx=\int\left(\sum_{s\ge 0}\d_x^s(u u_x)\frac{\d p}{\d u_s}\right)dx-\int(p u_x+u\d_x p)dx=\\
=\int\frac{\delta p}{\delta u}\d_x\left(\frac{u^2}{2}\right)dx-\int\d_x(p u)dx=\left\{\int p dx,\int\frac{u^3}{6}dx\right\}_{\d_x}=0.
\end{multline*}
Thus, $[uu_x,p]$ is a $\d_x$-derivative. 

Let us consider the lexicographical order on monomials $\prod_{k=1}^m u_k^{\alpha_k}$. It is easy to compute that, for a monomial $f(u)\prod_{k=1}^m u_k^{\alpha_k}$, we have (see \cite{LZ05})
\begin{gather}\label{eq: formula}
[uu_x,f(u)\prod_{k=1}^m u_k^{\alpha_k}]=\left(\sum_{k=1}^m(k+1)\alpha_k-\alpha_1-1\right)f(u)u_x\prod_{k=1}^m u_k^{\alpha_k}+
\begin{smallmatrix}
\text{monomials with the lower}\\
\text{lexicographical order}
\end{smallmatrix}.
\end{gather}

Let $f(u)\prod_{k=1}^m u_k^{\alpha_k}$ be the monomial in $p$ with the highest lexicographical order. From~\eqref{eq: formula} and the fact that $[u u_x,p]$ is a $\d_x$-derivative it follows that $m\ge 2$ and $\alpha_m=1$. The lexicographical order of the highest monomial in the polynomial
$$
p-\d_x\left(\frac{1}{\alpha_{m-1}+1}f(u)\left(\prod_{k=1}^{m-2} u_k^{\alpha_k}\right)u_{m-1}^{\alpha_{m-1}+1}\right)
$$
is lower than the lexicographical order of the highest monomial in $p$. We can do the same process further and prove that $p$ is a $\d_x$-derivative and, therefore, $\int p dx=0$.
\end{proof} 

Now let us prove Lemma~\ref{lemma: uniqueness}. Suppose that there exist two different local functionals $\overline q^1,\overline q^2\in\hLambda^{[0]}$, such that $\{\oh,\overline q^j\}_{\d_x}=0$ and $\overline q^j=\int\left(q_0(u)+\sum_{i\ge 1}q^j_i(u;u_1,\ldots)\hbar^i\right)dx$. We have 
\begin{gather}\label{eq: zero difference}
\{\oh,\overline q^1-\overline q^2\}_{\d_x}=0.
\end{gather}
Let $i_0$ be the smallest $i$, such that $\int(q_i^1-q_i^2)dx\ne 0$. From \eqref{eq: zero difference} it obviously follows that $\left\{\int\frac{u^3}{6}dx,\int(q^1_{i_0}-q^2_{i_0})dx,\right\}_{\d_x}=0$. Hence, by Lemma \ref{lemma: tmp}, $\int(q^1_{i_0}-q^2_{i_0})dx=0$. This contradiction proves the lemma.


\section{Cohomological field theories and the Dubrovin-Zhang hierarchies}\label{section: CohFT}

In this section we briefly recall the Dubrovin-Zhang theory of the hierarchies of the topological type. In Section~\ref{subsection: CohFT} we review the definition of cohomological field theory. In Section~\ref{subsection: DZ construction} we describe the construction of the Dubrovin-Zhang hierarchy associated to a semisimple cohomological field theory. 

\subsection{Cohomological field theory}\label{subsection: CohFT}

Here we recall the definition of cohomological field theory. For simplicity, we consider only one-dimensional cohomological field theories\footnote{To be completely precise, we consider one-dimensional cohomological field theories, where the scalar product of the unit with itself is equal to $1$.}. We refer the reader to \cite{Sha09} for a more detailed introduction to this subject.

A one-dimensional cohomological field theory is a collection of classes $\alpha_{g,n}\in H^*(\oM_{g,n};\mbC)$ defined for all $g$ and $n$ and satisfying the following properties (axioms):

\begin{itemize}

\item $\alpha_{g,n}$ belongs to the $S_n$-invariant part in the cohomology $H^*(\oM_{g,n};\mbC)$, where the $S_n$-action on $H^*(\oM_{g,n};\mbC)$ is induced by the mappings $\oM_{g,n}\to\oM_{g,n}$ defined by permutations of marked points.

\item We have $\alpha_{0,3}=1\in H^*(\oM_{0,3};\mbC)=\mbC$.

\item If $\pi\colon\oM_{g,n+1}\to \oM_{g,n}$ is the forgetful map, then $\pi^*\alpha_{g,n}=\alpha_{g,n+1}$.

\item  

\noindent a) If $gl\colon \oM_{g_1,n_1+1}\times\oM_{g_2,n_2+1}\to\oM_{g_1+g_2,n_1+n_2}$ is the gluing map, then $gl^*\alpha_{g_1+g_2,n_1+n_2}=\alpha_{g_1,n_1+1}\cdot\alpha_{g_2,n_2+1}.$

\noindent b) If $gl\colon\oM_{g-1,n+2}\to\oM_{g,n}$ is the gluing map, then $gl^*\alpha_{g,n}=\alpha_{g-1,n+2}$.

\end{itemize}

The potential $F$ of the cohomological field theory is defined as follows. Introduce variables~$t_d$, where $d\ge 0$. Then
\begin{align*}
&F:=\sum_{g\ge 0}F_g\hbar^g,\quad\text{where}\\
&F_g:=\sum_{\substack{n\ge 0\\2g-2+n>0}}\frac{1}{n!}\sum_{d_1,\ldots,d_n\ge 0}\left(\int_{\oM_{g,n}}\alpha_{g,n}\prod_{i=1}^n\psi_i^{d_i}\right)\prod_{i=1}^n t_{d_i}.
\end{align*}

\begin{example}
Let $\eps$ be an arbitrary complex number. Then the classes
$$
\alpha_{g,n}=1+\eps\lambda_1+\eps^2\lambda_2+\ldots+\eps^g\lambda_g\in H^*(\oM_{g,n};\mbC)
$$
form a one-dimensional cohomological field theory. 
\end{example}

\begin{example}
Let $\eps_1,\eps_2,\ldots$ be an arbitrary sequence of complex numbers. Then the classes
$$
\alpha_{g,n}=\exp\left(\sum_{i\ge 1}\eps_i\ch_{2i-1}(\Lambda)\right),
$$
where $\ch_{2i-1}(\Lambda)$ are the Chern characters of the Hodge bundle, form a one-dimensional cohomological field theory. In fact, any one-dimensional cohomological field theory has this form (see~\cite{MZ00}). 
\end{example}

\subsection{Dubrovin-Zhang hierarchy}\label{subsection: DZ construction}

In \cite{BPS12a} the authors gave a construction of a hamiltonian system of partial differential equations associated to an arbitrary semisimple cohomological field theory. In this section we recall that construction. For simplicity, we do it in the case of a one-dimensional cohomological field theory. Any one-dimensional cohomological field theory is semisimple.

We fix a one-dimensional cohomological field theory, $\alpha_{g,n}\in H^*(\oM_{g,n};\mbC)$, with a potential~$F=\sum_{g\ge 0}\hbar^g F_g$. In Sections~\ref{subsubsection: local functionals} and~\ref{subsubsection: Poisson operator} we construct a sequence of local functionals and a Poisson operator. In Section~\ref{subsection: solution} we present a solution of the constructed hierarchy. 

\subsubsection{Local functionals}\label{subsubsection: local functionals}

Let
$$
u:=\frac{\d^2 F}{\d t_0^2}.
$$ 
We identify $x$ with $t_0$. Let $u_n:=\d_x^n u$. From the axioms of cohomological field theory it follows that 
$$
u_n=t_n+\delta_{n,1}+O(t^2)+O(\hbar),\quad n\ge 0.
$$
Thus, any power series in $\hbar$ and $t_0,t_1,\ldots$ can be expressed as a power series in $\hbar$ and $u,u_1-1,u_2,u_3,\ldots$. 

Let 
$$
\Omega_{p,q}:=\frac{\d^2 F}{\d t_p\d t_q}.
$$ 
Let us express $\Omega_{p,q}$ as a power series in $\hbar$ and $u,u_1-1,u_2,\ldots$. In \cite{BPS12a} it is proved that the coefficient of $\hbar^g$ in $\Omega_{p,q}$ is a differential polynomial of differential degree $2g$. So, we can consider~$\Omega_{p,q}$ as an element of $\hcA^{[0]}$. Let $\oh_n:=\int \Omega_{0,n+1} dx\in\widehat\Lambda^{[0]}$, $n\ge 1$. The local functionals~$\oh_n$ will be the Hamiltonians of our hierarchy. It is easy to show that $\Omega_{0,n}=\frac{u^{n+1}}{(n+1)!}+O(\hbar)$.

\subsubsection{Poisson operator}\label{subsubsection: Poisson operator}

Let us construct a Poisson operator of our hierarchy. Let 
$$
v:=\frac{\d^2 F_0}{\d t_0^2}
$$
and $v_n:=\d_x^n v$. From the axioms of cohomological field theory it follows that 
$$
v_n=t_n+\delta_{n,1}+O(t^2).
$$
Thus, any power series in $t_0,t_1,t_2,\ldots$ can be expressed as a power series in $v,v_1-1,v_2,\ldots$. 

Consider $u$ as a power series in $v,v_1-1,v_2,\ldots$. Consider the differential operator
$$
K:=\left(\sum_{p\ge 0}\frac{\d u}{\d v_p}\d_x^p\right)\circ\d_x\circ\left(\sum_{q\ge 0}(-\d_x)^q\circ\frac{\d u}{\d v_q}\right).
$$
We can express this operator in the following form 
$$
K=\sum_{i,j\ge 0}p_{i,j}\hbar^i\d_x^j,
$$
where $p_{i,j}$ is a power series in $u,u_1-1,u_2,\ldots$. In \cite{BPS12a} it is proved that $p_{i,j}$ is a differential polynomial of differential degree $2i+1-j$. Thus, $K$ is an operator of the form \eqref{eq: dif. operator}. In fact, the operator~$K$ is Poisson and the local functionals $\oh_n$ commute with respect to the Poisson bracket defined by it: $\{\oh_n,\oh_m\}_K=0$.

By definition (see \cite{BPS12a}), the Dubrovin-Zhang hierarchy, associated to our cohomological field theory, is the hamiltonian hierarchy, formed by the local functionals $\oh_n, n\ge 1$, and the Poisson operator $K$. 

\subsubsection{Solution of the hierarchy}\label{subsection: solution}

We have the following lemma (see \cite{BPS12a}).
\begin{lemma}\label{lemma: solution}
The series $\frac{\d^2 F}{\d t_0^2}$ is a solution of the constructed hierarchy:
$$
\frac{\d u}{\d t_n}=K\frac{\delta\oh_n}{\delta u},\quad n\ge 1.
$$
\end{lemma}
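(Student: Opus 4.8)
The plan is to deduce the lemma from the genus-zero part of the hierarchy by a chain-rule computation, the genuine work being hidden in the polynomiality results already quoted above. First I would treat genus zero. Write $\Omega^0_{p,q}$ for the value of $\frac{\d^2 F_0}{\d t_p\d t_q}$ expressed through $v=\frac{\d^2 F_0}{\d t_0^2}$. The genus-zero topological recursion relation $\frac{\d^3 F_0}{\d t_q\,\d t_0^2}=\frac{\d^2 F_0}{\d t_{q-1}\,\d t_0}\cdot\frac{\d^3 F_0}{\d t_0^3}$, valid for $q\ge1$, shows by induction that each $\Omega^0_{0,q}$ is a function of $v$ alone and that $\frac{\d\Omega^0_{0,q}}{\d v}=\Omega^0_{0,q-1}$. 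Differentiating $v=\frac{\d^2 F_0}{\d t_0^2}$ gives directly $\frac{\d v}{\d t_n}=\d_x\Omega^0_{0,n}(v)=\d_x\frac{\delta}{\delta v}\int\Omega^0_{0,n+1}(v)\,dx$, so $v$ solves the hamiltonian hierarchy with Poisson operator $\d_x$ and Hamiltonians $\int\Omega^0_{0,n+1}(v)\,dx$; these commute with respect to $\{\cdot,\cdot\}_{\d_x}$ since the relevant integrand is a function of $v$ times $v_x$, hence a total $x$-derivative.

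Next I would pass to full genus. Because $v_n=t_n+\delta_{n,1}+O(t^2)$, the series $u=\frac{\d^2 F}{\d t_0^2}$ can be written as a power series in $v,v_1-1,v_2,\ldots$ and $\hbar$ with $u=v+O(\hbar)$ — the Dubrovin--Zhang quasi-Miura transformation — and by the very definition of $K$ one has $K=L\circ\d_x\circ L^*$, where $L:=\sum_{p\ge0}\frac{\d u}{\d v_p}\d_x^p$ and $L^*:=\sum_{q\ge0}(-\d_x)^q\circ\frac{\d u}{\d v_q}$. Applying the chain rule to $u=u(v;v_1,\ldots;\hbar)$ and using $\frac{\d v}{\d t_n}=\d_x\Omega^0_{0,n}(v)$ from the first step gives $\frac{\d u}{\d t_n}=L\,\d_x\Omega^0_{0,n}(v)$, while the chain rule for variational derivatives gives $L^*\frac{\delta\oh_n}{\delta u}=\frac{\delta}{\delta v}\int\Omega_{0,n+1}\,dx$, and hence $K\frac{\delta\oh_n}{\delta u}=L\,\d_x\frac{\delta}{\delta v}\int\Omega_{0,n+1}\,dx$. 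Since $L=1+O(\hbar)$ is invertible and $\d_x$ kills only constants, and since the two sides have the same $\hbar^0$-term $\Omega^0_{0,n}(v)$ (here using $\frac{\d\Omega^0_{0,n+1}}{\d v}=\Omega^0_{0,n}$), the asserted equality $\frac{\d u}{\d t_n}=K\frac{\delta\oh_n}{\delta u}$ reduces to the single identity $\frac{\delta}{\delta v}\int\Omega_{0,n+1}\,dx=\Omega^0_{0,n}(v)$.

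To establish this identity, observe that $\frac{\delta}{\delta v}\int\Omega^0_{0,n+1}(v)\,dx=\frac{\d\Omega^0_{0,n+1}}{\d v}=\Omega^0_{0,n}(v)$, so it suffices to show that $\int\Omega_{0,n+1}\,dx$ and $\int\Omega^0_{0,n+1}(v)\,dx$ agree, i.e.\ that the coefficient of $\hbar^g$ in $\Omega_{0,n+1}$ is a total $x$-derivative for each $g\ge1$. Written through $v$ that coefficient equals $\d_x\frac{\d F_g}{\d t_{n+1}}$, and for $g\ge1$ the free energy $F_g$ is a function of only finitely many of $v,v_1,v_2,\ldots$ (a classical fact), whence so is $\frac{\d F_g}{\d t_{n+1}}=\sum_k\frac{\d F_g}{\d v_k}\d_x^{k+1}\Omega^0_{0,n+1}(v)$, and the coefficient in question is a total $x$-derivative as needed. (The relations $\{\oh_n,\oh_m\}_K=0$ and the Poisson property of $K$, which are not needed for the lemma itself, would follow by transporting the genus-zero structure along $v\mapsto u$, cf.\ Lemma~\ref{lemma: transformed solution}.)

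The delicate point — and the main obstacle — is that the quasi-Miura transformation $v\mapsto u$ is merely \emph{rational} in $v_1,v_2,\ldots$, so that a priori $\oh_n=\int\Omega_{0,n+1}\,dx$, the operator $K$, and all the computations above live among rational functions of the $v$-jets rather than among differential polynomials. What makes the argument legitimate is precisely that $\oh_n$ is a genuine element of $\hLambda$ and $K$ a genuine differential operator with coefficients in $\hcA$ in the $u$-variables, and that the polynomial and rational descriptions are compatible; this is the polynomiality of the Dubrovin--Zhang hierarchy proved in \cite{BPS12a} and recalled in Section~\ref{subsection: DZ construction}, and once it is granted the argument above goes through.
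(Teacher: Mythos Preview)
The paper does not prove this lemma in-text; it merely states it and cites \cite{BPS12a}. Your argument is correct and is essentially the standard Dubrovin--Zhang proof that the cited reference makes rigorous: one establishes the genus-zero principal hierarchy from the topological recursion relation, transports it along the quasi-Miura change $v\mapsto u$ (whence $K=L\circ\partial_x\circ L^*$ and the variational chain rule $L^*\frac{\delta}{\delta u}=\frac{\delta}{\delta v}$), and then observes that $\int\Omega_{0,n+1}\,dx=\int\Omega^0_{0,n+1}(v)\,dx$ as local functionals because the $\hbar^g$-piece $\partial_x\bigl(\partial F_g/\partial t_{n+1}\bigr)$ is a total $x$-derivative for every $g\ge 1$. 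You correctly flag that the whole computation lives a priori in rational functions of the $v$-jets and that the polynomiality theorem of \cite{BPS12a} is what legitimises interpreting the result in $\hcA$, $\hLambda$.

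Two small presentational remarks. First, the sentence ``since $L=1+O(\hbar)$ is invertible and $\partial_x$ kills only constants \ldots\ the asserted equality reduces to $\frac{\delta}{\delta v}\int\Omega_{0,n+1}\,dx=\Omega^0_{0,n}(v)$'' slightly overstates the reduction: invertibility of $L$ and $\ker\partial_x=\{\text{constants}\}$ only give that the difference is a constant in each $\hbar$-order; your subsequent argument then proves the stronger statement that the difference is zero, so the logic is sound but the wording is a bit loose. Second, the fact that $F_g$ for $g\ge 1$ depends on only finitely many $v$-jets (the ``$3g-2$'' property) is indeed needed to make $\partial F_g/\partial t_{n+1}$ a legitimate element of the relevant function space; like the polynomiality, this too is part of the package imported from \cite{BPS12a}/\cite{DZ05} rather than an independent ingredient.
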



\section{Reformulation of Theorems \ref{theorem: main theorem}, \ref{theorem: DZ hierarchy} and of Proposition~\ref{proposition: deformed KdV}}\label{section: reformulation}

In this section we formulate three propositions and show that Theorems \ref{theorem: main theorem}, \ref{theorem: DZ hierarchy} and Proposition~\ref{proposition: deformed KdV} follow from them. These propositions are proved in the next three sections of the paper.

Consider the cohomological field theory~\eqref{Hodge classes} and the corresponding Dubrovin-Zhang hierarchy.
\begin{proposition}\label{proposition: pr1}
The Miura transformation
\begin{gather}\label{eq: Miura transformation}
u\mapsto\widetilde u=u+\sum_{g\ge 1}\frac{(-1)^g}{2^{2g}(2g+1)!}\hbar^{g}\eps^g u_{2g}
\end{gather}
transforms the Poisson operator of the hierarchy to $\d_x$ and the Hamiltonian $\oh_1$ to 
\begin{gather}\label{eq: transformed hamiltonian}
\int\left(\frac{\tu^3}{6}+\frac{\hbar}{24}\tu\tu_2+\frac{\hbar^2\eps}{1440}\tu\tu_4+\sum_{g\ge 3}\hbar^g\eps^{g-1}c_g\tu\tu_{2g}\right)dx,
\end{gather}
where $c_g$, $g\ge 3$, are some complex constants.
\end{proposition}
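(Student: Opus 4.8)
\medskip

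\noindent\emph{Strategy.} The plan is to reduce Proposition~\ref{proposition: pr1} to the $\eps=0$ case --- the trivial cohomological field theory, whose Dubrovin--Zhang hierarchy is the KdV hierarchy (Witten--Kontsevich), with Poisson operator $\d_x$ and $\oh_1=\int\big(\tfrac{u^3}{6}+\tfrac{\hbar}{24}uu_2\big)dx$ --- and to push all of the $\eps$-dependence through Mumford's Grothendieck--Riemann--Roch formula. That formula gives $\ch_{2l}(\Lambda)=0$ for $l\ge1$ and
\begin{gather*}
\ch_{2l-1}(\Lambda)=\frac{B_{2l}}{(2l)!}\Big(\kappa_{2l-1}-\sum_i\psi_i^{2l-1}+\tfrac12\,(\text{boundary pushforwards})\Big),
\end{gather*}
so that $1+\eps\lambda_1+\dots+\eps^g\lambda_g=\exp\big(\sum_{l\ge1}\tfrac{B_{2l}}{2l(2l-1)}\eps^{2l-1}\ch_{2l-1}(\Lambda)\big)$. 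This realizes the cohomological field theory \eqref{Hodge classes} as an explicit $R$-matrix deformation (together with a translation) of the trivial one, with deformation data built from the numbers $\tfrac{B_{2l}}{2l(2l-1)}$; this is the origin of the Bernoulli numbers in the hierarchy.

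Next I would use that every one-dimensional cohomological field theory has the same genus-zero potential $F_0$ --- because $\oM_{0,3}$ is a point, the Hodge classes restrict to $1$ in genus $0$, and the higher genus-zero $n$-point functions are reconstructed from $\langle\tau_0^3\rangle_0=1$ --- so that the flat coordinate $v=\d_x^2F_0$, the dispersionless hierarchy, and the Poisson operator $\d_x$ in the $v$-coordinate are literally those of KdV. Consequently the Hodge theory can differ from the trivial theory only through the quasi-Miura transformation $v\mapsto u^{Hodge}$, hence only through the operator $K^{Hodge}=L\circ\d_x\circ L^*$ with $L=\sum_p\tfrac{\d u^{Hodge}}{\d v_p}\d_x^p$, and through the Hamiltonians $\oh_n^{Hodge}=\int\Omega_{0,n+1}^{Hodge}\,dx$. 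By the polynomiality theorem of \cite{BPS12a} these objects are polynomial in $\hbar$ with differential-polynomial coefficients; at $\eps=0$ they reduce to $\d_x$ and $\int\big(\tfrac{u^3}{6}+\tfrac{\hbar}{24}uu_2\big)dx$; and their full $\eps$-dependence is governed by the $R$-matrix above, which in the one-dimensional situation has $u$-independent coefficients.

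The main step concerns the Poisson operator: I would prove that $K^{Hodge}$ is the constant-coefficient operator
\begin{gather*}
K^{Hodge}=\Pi^{-2}\circ\d_x,\qquad \Pi:=\sum_{g\ge0}\frac{(-1)^g}{2^{2g}(2g+1)!}(\hbar\eps)^g\d_x^{2g},
\end{gather*}
where $\Pi^{-1}=1+\sum_{g\ge1}\tfrac{2^{2g-1}-1}{2^{2g-1}}\tfrac{|B_{2g}|}{(2g)!}(\hbar\eps)^g\d_x^{2g}$; the resummation identities
\begin{gather*}
\sum_{g\ge0}\frac{(-1)^g}{2^{2g}(2g+1)!}y^{2g}=\frac{\sin(y/2)}{y/2},\qquad \frac{y/2}{\sin(y/2)}=1+\sum_{g\ge1}\frac{2^{2g-1}-1}{2^{2g-1}}\frac{|B_{2g}|}{(2g)!}y^{2g},
\end{gather*}
are exactly what match the Bernoulli coefficients produced by the $R$-matrix with the coefficients of \eqref{eq: Miura transformation} (note $\tfrac{2^{2g-1}-1}{2^{2g-1}}\tfrac{|B_{2g}|}{(2g)!}=\int_{\oM_{g,1}}\psi_1^{2g-2}\lambda_g$, so \eqref{eq: Miura transformation} is literally built from the $\lambda_g$-numbers). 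Granting $K^{Hodge}=\Pi^{-2}\circ\d_x$, the assertion about the Poisson operator is immediate: the Miura transformation \eqref{eq: Miura transformation} is realized by the operator $\Pi$, both of its Jacobian operators in the rule $K\mapsto\widetilde K=\big(\sum_p\tfrac{\d\tu}{\d u_p}\d_x^p\big)\circ K\circ\big(\sum_q(-\d_x)^q\circ\tfrac{\d\tu}{\d u_q}\big)$ coincide with $\Pi$ (since $\Pi$ is even in $\d_x$), and $\Pi$ commutes with $\d_x$, so $\widetilde K^{Hodge}=\Pi\circ\Pi^{-2}\circ\d_x\circ\Pi=\d_x$.

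Finally I would compute the transformed $\oh_1$. Rewriting $\oh_1^{Hodge}=\int\Omega_{0,2}^{Hodge}\,dx$ in the variable $\tu$, the $\hbar^0$ part is $\int\tfrac{\tu^3}{6}\,dx$ since $\Omega_{0,n}=\tfrac{u^{n+1}}{(n+1)!}+O(\hbar)$, and the differential-degree grading together with the quasi-homogeneity grading ($\deg\tu_k=1$, $\deg\hbar=1$, $\deg\eps=-1$, under which $\Omega_{0,2}$ is homogeneous of degree $3$) forces the $\hbar^g$ coefficient to be a constant multiple of $\eps^{g-1}\tu\,\tu_{2g}$ modulo $\im\d_x$, since the space of constant-coefficient differential polynomials of degree $2$ and differential degree $2g$ is one-dimensional modulo $\im\d_x$. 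The constants $c_1=\tfrac1{24}$ and $c_2=\tfrac1{1440}$ are then read off from the genus $\le2$ Hodge integrals (e.g. $\int_{\oM_{1,1}}\lambda_1=\tfrac1{24}$ and $\int_{\oM_{2,1}}\lambda_2\psi_1^2=\tfrac{7}{5760}$ together with the string and dilaton equations), and $c_g$ for $g\ge3$ is left unspecified, as in the statement. I expect the third paragraph to be the real obstacle: pushing Mumford's formula --- especially its boundary terms --- through the Dubrovin--Zhang/Givental construction and verifying the cancellation of all the contributions that are non-polynomial in $u_x$ (guaranteed by \cite{BPS12a}), so that $K^{Hodge}$ indeed comes out as the clean constant-coefficient operator $\Pi^{-2}\circ\d_x$.
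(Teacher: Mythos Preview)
Your overall architecture is right --- the same homogeneity grading you invoke for $\oh_1$ (namely $\deg u_k=1$, $\deg\hbar=1$, $\deg\eps=-1$, making $\Omega_{0,2}$ homogeneous of degree $3$) is exactly what the paper uses, and the $\lambda_g$-formula together with the Bernoulli identity $\sum_{g\ge0}\tfrac{(-1)^g}{2^{2g}(2g+1)!}y^{2g}\cdot\sum_{g\ge0}\tfrac{2^{2g-1}-1}{2^{2g-1}}\tfrac{|B_{2g}|}{(2g)!}y^{2g}=1$ is the same ingredient.

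The real difference, and the place where you have a gap, is the Poisson operator. You propose to obtain $K^{Hodge}=\Pi^{-2}\circ\d_x$ by pushing the full Mumford formula (including boundary terms) through the Givental/Dubrovin--Zhang machinery, and you correctly flag this as ``the real obstacle''. The paper avoids this entirely by applying the \emph{same} homogeneity operator to the coefficients $p_n$ of $K=\sum_n p_n\d_x^n$. From $O_1 v_i=(i-1)v_i$ one gets $O_1\tfrac{\d u}{\d v_n}=-n\tfrac{\d u}{\d v_n}$, hence $O_1 p_n=-(n-1)p_n$; combining with the differential-degree grading $O_2 p_n=(n-1)p_n$ from \cite{BPS12a} forces $\big(\sum_i u_i\tfrac{\d}{\d u_i}+\hbar\tfrac{\d}{\d\hbar}-\eps\tfrac{\d}{\d\eps}\big)p_n=0$. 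Since $\hbar^g\eps^j$ can appear only with $g\ge j$, this kills every $u_i$-dependence and every term with $g>j$: one lands immediately on $K=\sum_g b_g(\hbar\eps)^g\d_x^{2g+1}$ with constants $b_g$. At that point only the $(\hbar\eps)^g$ part of $u-v$ matters in $K=L\circ\d_x\circ L^*$, and \emph{that} part is exactly the $\lambda_g$-series $\Pi^{-1}$ by Faber--Pandharipande, so $K=\Pi^{-1}\circ\d_x\circ\Pi^{-1}$ and the Miura map $\Pi$ sends it to $\d_x$. No boundary-term bookkeeping or cancellation argument is needed.

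One smaller point: your proposed route to $c_2$ via $\int_{\oM_{2,1}}\lambda_2\psi_1^2$ is aimed at the wrong class --- $c_2$ is the coefficient of $\hbar^2\eps^1$, hence involves $\lambda_1$ in genus~$2$, not $\lambda_2$. The paper computes it by the Givental deformation operator $\widehat{z^1}$ (the first-order variation in $\ch_1(\Lambda)=\lambda_1$) applied to the known KdV $\Omega$'s; this is a short explicit calculation once one has the formula for $\widehat{z^{2l-1}}[u]$.
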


\begin{proposition}\label{proposition: pr2}
The following two local functionals 
\begin{align*}
&\oh_1=\int\left(\frac{u^3}{6}+\sum_{g\ge 1}\hbar^g\eps^{g-1}\frac{|B_{2g}|}{2(2g)!}u u_{2g}\right)dx,\\
&\oh_2=\int\left(\frac{u^4}{4!}+\frac{\hbar}{48}u^2 u_{xx}+\sum_{g\ge 2}\frac{|B_{2g}|}{(2g)!}\hbar^g\left(\eps^{g-2}\frac{g+1}{2}u u_{2g}+\eps^{g-1}\frac{1}{4}u^2u_{2g}\right)\right)dx,
\end{align*}
commute with respect to the bracket $\{\cdot,\cdot\}_{\d_x}$.
\end{proposition}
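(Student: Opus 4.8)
The plan is to compute the bracket $\{\oh_1,\oh_2\}_{\d_x}=\int\frac{\delta\oh_1}{\delta u}\,\d_x\frac{\delta\oh_2}{\delta u}\,dx$ directly, organising everything by the \emph{degree} (the grading of $\cA'$ in which each $u_i$ has degree $1$). A routine variational computation gives
$$
\frac{\delta\oh_1}{\delta u}=\frac{u^2}{2}+\Pi u,\qquad
\frac{\delta\oh_2}{\delta u}=\frac{u^3}{6}+\Big(\tfrac12\,u\,\Pi u+\tfrac14\,\Pi(u^2)\Big)+\Xi u,
$$
where $\Pi:=\sum_{g\ge1}\frac{|B_{2g}|}{(2g)!}\hbar^g\eps^{g-1}\d_x^{2g}$ and $\Xi:=\sum_{g\ge2}(g+1)\frac{|B_{2g}|}{(2g)!}\hbar^g\eps^{g-2}\d_x^{2g}$ are constant-coefficient operators, hence formally self-adjoint and commuting with $\d_x$. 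Since $\d_x$ preserves the degree while $\frac{\delta}{\delta u}$ lowers it by one, $\{\oh_1,\oh_2\}_{\d_x}$ splits into homogeneous components of degrees $2,3,4,5$, and it suffices to prove that each of them is zero.

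The components of degrees $5$ and $2$ vanish for formal reasons: the degree-$5$ part is $\int\frac{u^2}{2}\d_x\frac{u^3}{6}dx=\frac14\int\d_x\big(\tfrac{u^5}{5}\big)dx=0$, and the degree-$2$ part is $\int(\Pi u)\,\d_x(\Xi u)dx=\int u\,\big(\Pi\,\d_x\,\Xi\big)u\,dx=0$ because $\Pi\,\d_x\,\Xi$ is an odd-order constant-coefficient operator, hence anti-self-adjoint. The degree-$4$ part is handled by one integration by parts: one finds $\int\frac{u^2}{2}\d_x\big(\tfrac12 u\Pi u+\tfrac14\Pi(u^2)\big)dx=-\tfrac12\int u^2u_x\,\Pi u\,dx$ (the $\Pi(u^2)$-contribution drops out, again by anti-self-adjointness), while $\int(\Pi u)\,\d_x\tfrac{u^3}{6}dx=\tfrac12\int u^2u_x\,\Pi u\,dx$, and these cancel. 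So everything reduces to the degree-$3$ component.

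In the degree-$3$ component one integrates by parts and uses the operator identity $\Xi=\tfrac12\Pi^2+\tfrac1{2\eps}\big(\Pi-\tfrac{\hbar}{12}\d_x^2\big)$, after which the $\Pi^2$-terms cancel and the vanishing of the degree-$3$ component becomes the assertion
$$
\int u_x(\Pi u)^2\,dx=\frac2\eps\int u\,u_x\Big(\Pi-\frac{\hbar}{12}\d_x^2\Big)u\,dx\qquad\text{in }\hLambda.
$$
The operator identity for $\Xi$ is exactly the Bernoulli-number statement $\sum_{g\ge1}(g+1)\tfrac{|B_{2g}|}{(2g)!}z^{2g}=\tfrac12 b(z)^2+\tfrac12 b(z)+\tfrac{z^2}{8}$ for $b(z):=\sum_{g\ge1}\tfrac{|B_{2g}|}{(2g)!}z^{2g}=1-\tfrac z2\cot\tfrac z2$, and this follows from the Riccati equation $z\,b'(z)=b(z)^2-b(z)+\tfrac{z^2}{4}$, which is an immediate consequence of $(\cot\theta)'=-1-\cot^2\theta$.

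To prove the displayed $\hLambda$-identity I invoke Lemma~\ref{lemma: variational derivative}: the integrands of both sides vanish when all $u_i$ with $i\ge1$ are set to zero, so the identity holds iff the variational derivatives of the two sides agree. Introduce the constant-coefficient operators $\Phi:=\sum_{g\ge1}\tfrac{|B_{2g}|}{(2g)!}(\hbar\eps)^g\d_x^{2g}$ and $\Psi:=\sum_{g\ge2}\tfrac{|B_{2g}|}{(2g)!}(\hbar\eps)^g\d_x^{2g}=\Phi-\tfrac{\hbar\eps}{12}\d_x^2$, so that $\Pi=\tfrac1\eps\Phi$ and $\Pi-\tfrac\hbar{12}\d_x^2=\tfrac1\eps\Psi$; a short computation of variational derivatives (using self-adjointness of $\Phi,\Psi$ and $[\d_x,\Phi]=0$) reduces the claim to the identity of quadratic differential polynomials
$$
\Phi\big(u_1\,\Phi u\big)-(\Phi u)\,\Phi(u_1)=\Psi\big(u\,u_1\big)-u\,\Psi(u_1).
$$
A quadratic differential polynomial is determined by its values on $u=e^{px}+e^{qx}$, on which $\Phi$ acts as multiplication by $b(\sqrt{\hbar\eps}\,p)$, resp. $b(\sqrt{\hbar\eps}\,q)$, on the two exponential components; extracting the coefficient of $e^{(p+q)x}$ and writing $\xi:=\tfrac{\sqrt{\hbar\eps}}{2}p$, $\eta:=\tfrac{\sqrt{\hbar\eps}}{2}q$, the identity becomes the trigonometric identity
$$
\big(1-(\xi+\eta)\cot(\xi+\eta)\big)\big(\xi+\eta-\xi\eta(\cot\xi+\cot\eta)\big)-(\xi+\eta)(1-\xi\cot\xi)(1-\eta\cot\eta)=\xi^2\cot\xi+\eta^2\cot\eta-(\xi+\eta)^2\cot(\xi+\eta)-\xi\eta(\xi+\eta),
$$
which follows at once from the addition formula $\cot(\xi+\eta)\,(\cot\xi+\cot\eta)=\cot\xi\cot\eta-1$. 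I expect the main obstacle to be not any single step but the bookkeeping of the degree-$3$ reduction: bringing $\int(\Pi u)\d_x\big(\tfrac12 u\Pi u+\tfrac14\Pi(u^2)\big)dx$ and $\int\frac{u^2}{2}\d_x(\Xi u)dx$, together with the operator identity for $\Xi$, into the single clean form above while keeping track of the powers of $\eps$; once that normal form is reached, the remaining input consists only of the classical Bernoulli/cotangent identities above, which could be relegated to the Appendix.
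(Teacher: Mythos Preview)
Your argument is correct. The degree decomposition and the operator identity $\Xi=\tfrac12\Pi^2+\tfrac{1}{2\eps}(\Pi-\tfrac{\hbar}{12}\d_x^2)$ do reduce the degree-$3$ piece to exactly the displayed $\hLambda$-identity; the variational derivatives come out as you state, and the trigonometric identity follows in one line from $\cot(\xi+\eta)(\cot\xi+\cot\eta)=\cot\xi\cot\eta-1$. The apparent $1/\eps$ is harmless: after passing to $\Phi=\eps\Pi$ and $\Psi=\Phi-\tfrac{\hbar\eps}{12}\d_x^2$ everything is polynomial in $\hbar\eps$.

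The paper's proof follows the same overall pattern but with a different organisation and a different endgame. Instead of your degree grading and operator calculus, the paper uses two ad hoc bracket identities (the analogues of your degree-$2$ and degree-$4$ cancellations) together with the Bernoulli convolution $\sum_{g_1+g_2=g}\tfrac{|B_{2g_1}||B_{2g_2}|}{(2g_1)!(2g_2)!}=\tfrac{(2g+1)|B_{2g}|}{(2g)!}$, then takes the variational derivative directly and is left with Lemma~\ref{lemma: identity},
\[
\sum_{i=0}^{g}\frac{B_{2i}B_{2g-2i}}{(2i)!(2g-2i)!}\Big(u_{2i}u_{2g-2i+1}-\d_x^{2i}(u_1u_{2g-2i})\Big)=-\tfrac14\delta_{g,1}u_1u_2,
\]
which it proves in Appendix~\ref{section: proof of the identity} by an induction intertwining two families of generating-function identities. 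Your route replaces that induction by the symbol argument (evaluate the quadratic differential polynomial on $u=e^{px}+e^{qx}$) and the cotangent addition law; this is shorter and more transparent, at the cost of requiring the reader to accept the ``symbol determines the polynomial'' principle. Conversely, the paper's approach keeps everything at the level of explicit Bernoulli coefficients and avoids introducing the auxiliary operators $\Pi,\Xi,\Phi,\Psi$. Both proofs rest on the same analytic input, the Riccati relation for $b(z)=1-\tfrac{z}{2}\cot\tfrac{z}{2}$, which is exactly your $zb'=b^2-b+\tfrac{z^2}{4}$ and the paper's equation~\eqref{eq: first derivative}.
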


\begin{proposition}\label{proposition: pr3}
Suppose there exists a sequence of complex numbers $c_g$, $g\ge 1$, $c_1\ne 0$, that satisfies the following property: there exists a local functional $\oh_2\in\hLambda^{[0]}$ of the form
$$
\oh_2=\int\left(\frac{u^4}{24}+O(\hbar)\right)dx
$$
that commutes with the local functional 
$$
\oh_1=\int\left(\frac{u^3}{6}+\sum_{g\ge 1}\hbar^{g}\eps^{g-1}c_g uu_{2g}\right)dx
$$
with respect to the bracket $\{\cdot,\cdot\}_{\d_x}$. Then all numbers $c_g$, for $g\ge 3$, are uniquely determined by $c_1$ and $c_2$.
\end{proposition}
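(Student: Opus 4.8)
The plan is to expand the commutation relation $\{\oh_1,\oh_2\}_{\d_x}=0$ in powers of $\hbar$ and simultaneously in \emph{degree} (the second gradation of the paper, assigning degree $1$ to every $u_i$), and then to run a minimal‑counterexample argument. Write $\oh_1=\oh_1^{(0)}+\sum_{g\ge1}\hbar^g\oh_1^{(g)}$ with $\oh_1^{(0)}=\int\tfrac{u^3}{6}dx$ and $\oh_1^{(g)}=c_g\eps^{g-1}\int uu_{2g}dx$, and $\oh_2=\sum_{g\ge0}\hbar^g\oh_2^{(g)}$ with $\oh_2^{(0)}=\int\tfrac{u^4}{24}dx$; the order‑$\hbar^g$ component of the commutation relation reads $\{\oh_1^{(0)},\oh_2^{(g)}\}_{\d_x}=-\sum_{a=1}^g\{\oh_1^{(a)},\oh_2^{(g-a)}\}_{\d_x}$. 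The only inputs are Lemma~\ref{lemma: uniqueness} (which already shows that $\oh_2$ itself is uniquely determined by $\oh_1$) and Lemma~\ref{lemma: tmp} (the map $X\mapsto\{\oh_1^{(0)},X\}_{\d_x}=\{\int\tfrac{u^3}{6}dx,X\}_{\d_x}$ is injective on local functionals whose density is homogeneous of positive differential degree); the structural point that drives everything is that each dispersive term of $\oh_1$ is the single degree‑$2$ monomial $c_g\eps^{g-1}uu_{2g}$, so $c_g$ enters the commutation relation only through brackets with the degree‑$2$ functional $\oh_1^{(g)}$. Now suppose $\{c_g\}$ and $\{c_g'\}$ are two sequences, $c_1=c_1'\ne0$, $c_2=c_2'$, each admitting a commuting Hamiltonian $\oh_2$ and $\oh_2'$ of the stated form, and let $N\ge3$ be minimal with $c_N\ne c_N'$; I want a contradiction. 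Running the order‑$\hbar^b$ equations for $b=1,\dots,N-1$ (with $\oh_2^{(0)}=\oh_2'^{(0)}=\int\tfrac{u^4}{24}dx$ as base case) and feeding them into Lemma~\ref{lemma: tmp} (the density of $\oh_2^{(b)}-\oh_2'^{(b)}$ is homogeneous of positive differential degree) gives $\oh_2^{(b)}=\oh_2'^{(b)}$ for all $b<N$; a short direct computation identifies in particular $\oh_2^{(1)}=\oh_2'^{(1)}=-c_1\int uu_1^2dx$.

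Subtracting the two order‑$\hbar^N$ equations, all terms cancel except the $\oh_1^{(N)}$‑term, so
\begin{gather*}
\{\oh_1^{(0)},\oh_2^{(N)}-\oh_2'^{(N)}\}_{\d_x}=-(c_N-c_N')\eps^{N-1}\Big\{\int uu_{2N}dx,\oh_2^{(0)}\Big\}_{\d_x}=-(c_N-c_N')\eps^{N-1}\int u^2u_1u_{2N}dx .
\end{gather*}
Decomposing by degree and using Lemma~\ref{lemma: tmp} once more, $\oh_2^{(N)}-\oh_2'^{(N)}$ is purely of degree $3$, so there is a fixed degree‑$3$ local functional $W_N$ — the unique solution of $\{\oh_1^{(0)},W_N\}_{\d_x}=-\eps^{N-1}\int u^2u_1u_{2N}dx$ — with $\oh_2^{(N)}-\oh_2'^{(N)}=(c_N-c_N')W_N$. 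Next, take the degree‑$3$ part of the difference of the two order‑$\hbar^{N+1}$ equations: the degree‑$3$ part of $\oh_2^{(N+1)}-\oh_2'^{(N+1)}$ enters only through its degree‑$2$ component, a multiple of $\int uu_{2N+2}dx$, and $\{\oh_1^{(0)},\int uu_{2N+2}dx\}_{\d_x}=-2\int uu_1u_{2N+2}dx$, while the contributions that survive the subtraction come only from the $a=1$ term (the discrepancy $\oh_2^{(N)}-\oh_2'^{(N)}=(c_N-c_N')W_N$) and the $a=N$ term (the discrepancy $\oh_1^{(N)}-\oh_1'^{(N)}$, paired with $\oh_2^{(1)}$). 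One obtains
\begin{gather*}
(c_N-c_N')\,\Phi_N\in\mathrm{span}\Big(\int uu_1u_{2N+2}dx\Big),\qquad
\Phi_N=c_1\Big(\Big\{\int uu_2dx,W_N\Big\}_{\d_x}-\eps^{N-1}\Big\{\int uu_{2N}dx,\int uu_1^2dx\Big\}_{\d_x}\Big),
\end{gather*}
and here the hypothesis $c_1\ne0$ is essential — it is precisely what prevents $\Phi_N$ from degenerating.

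Thus the proof reduces to the purely algebraic claim that, for $N\ge3$, $\Phi_N$ is \emph{not} a scalar multiple of $\int uu_1u_{2N+2}dx$: granting this, the displayed inclusion forces $c_N=c_N'$, contradicting the choice of $N$, and the proposition follows. I would establish this by solving for $W_N$ explicitly from the identity $\{\oh_1^{(0)},W_N\}_{\d_x}=-\eps^{N-1}\int u^2u_1u_{2N}dx$ and then expanding $\Phi_N$ with the commutator formula~\eqref{eq: formula}, tracking the lexicographically highest monomial exactly as in the proof of Lemma~\ref{lemma: tmp} to see that $\Phi_N$ does not reduce modulo $\d_x$‑derivatives to a multiple of $uu_1u_{2N+2}$. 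The cutoff $N\ge3$ has a transparent meaning: the ambient space of degree‑$3$, differential‑degree‑$(2N+3)$ local functionals is one‑dimensional for $N=1,2$ and has dimension $\ge2$ for $N\ge3$ (its dimension equals the number of ways to write $2N+3$ as a sum of $2$'s and $3$'s), so for $N=1,2$ the inclusion above is automatic and only fixes an auxiliary coefficient rather than $c_N$ — which is exactly why $c_1$ and $c_2$ remain free — whereas for $N\ge3$ it genuinely pins $c_N$ down. I expect the explicit verification that $\Phi_N$ is non‑degenerate, uniformly in $N$, to be the main technical hurdle.
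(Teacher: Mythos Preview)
Your reduction is correct and, once unwound, lands on exactly the same obstruction the paper uses. Indeed, your $W_N$ can be computed in closed form: from identity~\eqref{eq: identity2} (or a one-line check) one has $W_N=\tfrac{\eps^{N-1}}{2}\int u^2u_{2N}\,dx$, and then
\[
\Phi_N=\frac{c_1\eps^{N-1}}{2}\Big(\big\{\textstyle\int uu_2\,dx,\int u^2u_{2N}\,dx\big\}_{\d_x}+\big\{\textstyle\int uu_{2N}\,dx,\int u^2u_2\,dx\big\}_{\d_x}\Big)=\frac{c_1\eps^{N-1}}{2}\,\overline f_{N,1}
\]
in the notation of Section~\ref{section: pr3}. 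Since $\int uu_1u_{2N+2}\,dx$ spans the same line as $\overline f_{N+1}$, your claim ``$\Phi_N\notin\mathrm{span}\big(\int uu_1u_{2N+2}\,dx\big)$'' is literally the linear independence of $\overline f_g$ and $\overline f_{g-1,1}$ with $g=N+1$. So the two arguments differ only in packaging (minimal counterexample versus the paper's direct forward induction) and reduce to the same linear-algebra question.

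The genuine gap is that you defer this verification, and your proposed method for it---solving for $W_N$, expanding $\Phi_N$, and tracking lexicographically highest monomials---is much heavier than necessary. The paper dispatches it in a few lines: one computes $\overline f_{g-1,1}=-4\,\overline f_g-2\int u_1^2u_{2g-1}\,dx$, takes the variational derivatives of $\int u^2u_{2g+1}\,dx$ and $\int u_1^2u_{2g-1}\,dx$, reads off their coefficients on $u_1u_{2g}$ and $u_3u_{2g-2}$, and checks that the resulting $2\times 2$ determinant is nonzero precisely for $g\ge 4$. This simultaneously finishes the proof and explains the cutoff you noted heuristically: the determinant vanishes exactly at $g=2,3$, which is why $c_1$ and $c_2$ remain free. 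Replace your last paragraph with this computation and the argument is complete.
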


Let us show that Theorems \ref{theorem: main theorem}, \ref{theorem: DZ hierarchy} and Proposition~\ref{proposition: deformed KdV} follow from these propositions. 

From the propositions it follows that the Miura transform of our Dubrovin-Zhang hierarchy is a hierarchy with $\d_x$ as a Poisson operator and the local functional \eqref{eq: h1} as the Hamiltonian~$\oh_1$. This proves the existence statement of Proposition~\ref{proposition: deformed KdV}. The uniqueness statement follows from Lemma~\ref{lemma: uniqueness}. We also immediately get Theorem~\ref{theorem: DZ hierarchy}. Theorem~\ref{theorem: main theorem} follows from Theorem~\ref{theorem: DZ hierarchy}, Lemma~\ref{lemma: solution} and Lemma~\ref{lemma: transformed solution}.


\section{Proof of Proposition \ref{proposition: pr1}}\label{section: pr1}

We have $\oh_n=\int\Omega_{0,n+1}dx$. The proof of the proposition is splitted in four steps. In Section~\ref{subsection: homogeneity} we derive a certain homogeneity property of the differential polynomials $\Omega_{p,q}$. In Section~\ref{subsection: coefficient of he} we find the coefficient of $\hbar^g\eps^g$ in the potential $F^{Hodge}$. In Section~\ref{subsection: Miura transformation} we prove that substitution~\eqref{eq: Miura transformation} kills the coefficients of $\hbar^g\eps^g$ in the Hamiltonians $\oh_n$ and show that 
$$
\oh_1=\int\left(\frac{\widetilde u^3}{6}+\sum_{g\ge 1}\hbar^g\eps^{g-1}c_g \tu\tu_{2g}\right)dx.
$$ 
We also show that $c_1=\frac{1}{24}$. The computation of $c_2$ is quite technical, it is done in Appendix~\ref{section: coefficient of h2}. Section~\ref{subsection: bracket} is devoted to the computation of the Poisson operator of our Dubrovin-Zhang hierarchy.

Let us fix some notations. By $F^{Hodge}$ we denote the potential of the cohomological field theory~\eqref{Hodge classes}. We also use the notations from Section~\ref{subsection: DZ construction}:
\begin{gather*}
u_i:=\d_x^i\frac{\d^2 F^{Hodge}}{\d t_0^2},\qquad v_i:=\d_x^i\frac{\d^2 F_0^{Hodge}}{\d t_0^2}.
\end{gather*}
Recall that we identify $x$ with $t_0$.

\subsection{Homogeneity of $\Omega_{p,q}$}\label{subsection: homogeneity}

The dimension of $\oM_{g,n}$ is equal to $3g-3+n$, thus, the coefficient of $\hbar^g\eps^j\prod_{i\ge 0}t_i^{d_i}$ in $F^{Hodge}$ is non-zero only if ${\sum_{i\ge 0}(i-1)d_i+j=3g-3}$. Consider the linear differential operator $O_1$ defined by 
$$
O_1:=\sum_{i\ge 0}(i-1)t_i\frac{\d}{\d t_i}+\eps\frac{\d}{\d\eps}-3\hbar\frac{\d}{\d\hbar}.
$$
We get 
\begin{gather}\label{eq: eq1}
O_1 F^{Hodge}=-3F^{Hodge}.
\end{gather}
From \eqref{eq: eq1} and the commutation relation (recall that $\d_x=\frac{\d}{\d t_0}$)
\begin{gather}\label{eq: commutation relation}
[\d_x,O_1]=-\d_x
\end{gather}
it is clear that
$$
O_1u_n=(n-1)u_n.
$$
Thus, 
\begin{gather*}
O_1=\sum_{i\ge 0}(i-1)u_i\frac{\d}{\d u_i}+\eps\frac{\d}{\d\eps}-3\hbar\frac{\d}{\d\hbar}.
\end{gather*}

From \eqref{eq: eq1} it is easy to see that
\begin{gather}\label{eq: eq2}
O_1\Omega_{p,q}=-(p+q+1)\Omega_{p,q}.
\end{gather}
On the other hand, in \cite{BPS12a} it is proved that $\Omega_{p,q}$ is a power series in $\hbar$, where the coefficient of $\hbar^g$ is a homogeneous differential polynomial of differential degree $2g$. This property can be written as 
\begin{align}
&O_2\Omega_{p,q}=0,\quad\text{where}\label{eq: eq3}\\
&O_2:=\sum_{i\ge 0}iu_i\frac{\d}{\d u_i}-2\hbar\frac{\d}{\d\hbar}.\notag
\end{align}
If we subtract \eqref{eq: eq2} from \eqref{eq: eq3}, we get
\begin{gather}\label{eq: homogeneity}
\left(\sum_{i\ge 0}u_i\frac{\d}{\d u_i}+\hbar\frac{\d}{\d\hbar}-\eps\frac{\d}{\d\eps}\right)\Omega_{p,q}=(p+q+1)\Omega_{p,q}.
\end{gather}
We have that $\Omega_{p,q}$ is a power series in $\hbar$ and $\eps$ with the coefficients that are differential polynomials. It is easy to see that the coefficient of $\hbar^g\eps^j$ is non-zero, only if $g\ge j$. From \eqref{eq: homogeneity} it follows that the coefficient of $\hbar^g\eps^j$ is a polynomial in $u,u_1,\ldots$ of degree $p+q+1-g+j$.

\subsection{Coefficient of $\hbar^g\eps^g$}\label{subsection: coefficient of he}

The so-called $\lambda_g$-conjecture, proved in \cite{FP03}, tells that
\begin{gather}\label{eq: lambdag}
\int_{\oM_{g,n}}\lambda_g\psi_1^{d_1}\psi_2^{d_2}\ldots\psi_n^{d_n}=\frac{2^{2g-1}-1}{2^{2g-1}}\frac{|B_{2g}|}{(2g)!}\frac{(2g-3+n)!}{d_1!d_2!\ldots d_n!},\quad g\ge 1,\quad \sum_{i=1}^n d_i=2g-3+n.
\end{gather}
We have 
$$
F^{Hodge}_0=\sum_{n\ge 3}\frac{1}{n!}\sum_{\substack{d_1,\ldots,d_n\ge 0\\d_1+\ldots+d_n=n-3}}\frac{(n-3)!}{d_1!\ldots d_n!}t_{d_1}\ldots t_{d_n}.
$$
Therefore, from \eqref{eq: lambdag} it follows that, for $g\ge 1$, the coefficient of $\hbar^g\eps^g$ in $F^{Hodge}$ is equal to~$\frac{2^{2g-1}-1}{2^{2g-1}}\frac{|B_{2g}|}{(2g)!}v_{2g-2}$.

Consider now $\Omega_{0,n}=\frac{\d^2 F^{Hodge}}{\d t_0\d t_n}$ as a series in $\hbar,\eps,v,v_1-1,v_2,\ldots$. We get that the coefficient of $\hbar^g\eps^g$ is equal to
$$
\frac{2^{2g-1}-1}{2^{2g-1}}\frac{|B_{2g}|}{(2g)!}\d_x^{2g-1}\left(\frac{\d v}{\d t_n}\right)=\frac{2^{2g-1}-1}{2^{2g-1}}\frac{|B_{2g}|}{(2g)!}\d_x^{2g-1}\left(\frac{v^n}{n!}v_x\right)=\frac{2^{2g-1}-1}{2^{2g-1}}\frac{|B_{2g}|}{(2g)!}\d_x^{2g}\left(\frac{v^{n+1}}{(n+1)!}\right).
$$ 
Thus,
$$
\Omega_{0,n}=\frac{v^{n+1}}{(n+1)!}+\sum_{g\ge 1}\hbar^g\eps^g\frac{2^{2g-1}-1}{2^{2g-1}}\frac{|B_{2g}|}{(2g)!}\d_x^{2g}\left(\frac{v^{n+1}}{(n+1)!}\right)+\sum_{g>j\ge 0}\hbar^g\eps^jf^n_{g,j}(v,v_1-1,v_2,\ldots),
$$
where $f^n_{g,j}(v,v_1-1,v_2,\ldots)$ are power series in $v,v_1-1,v_2,\ldots$.

\subsection{Miura transformation}\label{subsection: Miura transformation}

We have 
\begin{gather}\label{eq: formula for u}
u=v+\sum_{g\ge 1}(\hbar\eps)^g\frac{2^{2g-1}-1}{2^{2g-1}}\frac{|B_{2g}|}{(2g)!}v_{2g}+\sum_{g>j\ge 0}\hbar^g\eps^j f^0_{g,j}(v,v_1-1,v_2,\ldots).
\end{gather}
It is easy to check that 
\begin{gather}\label{eq:identity for Bernoulli}
\left(1+\sum_{g\ge 1}\frac{2^{2g-1}-1}{2^{2g-1}}\frac{|B_{2g}|}{(2g)!}z^{2g}\right)\left(1+\sum_{g\ge 1}\frac{(-1)^g}{2^{2g}(2g+1)!}z^{2g}\right)=1.
\end{gather}
Therefore,
$$
v=\tu+\sum_{g>j\ge 0}\hbar^g\eps^jq_{g,j}(\tu,\tu_1-1,\tu_2,\ldots),
$$
where $q_{g,j}(\tu,\tu_1-1,\tu_2,\ldots)$ are power series in $\tu,\tu_1-1,\tu_2,\ldots$.

We get
\begin{align*}
&\Omega_{0,n}=\frac{\tu^{n+1}}{(n+1)!}+\sum_{g\ge 1}\hbar^g\eps^g\frac{2^{2g-1}-1}{2^{2g-1}}\frac{|B_{2g}|}{(2g)!}\d_x^{2g}\left(\frac{\tu^{n+1}}{(n+1)!}\right)+\sum_{g>j\ge 0}\hbar^g\eps^jw^n_{g,j}(\tu,\tu_1,\ldots),\\
&\oh_n=\int\Omega_{0,n+1}dx=\int\left(\frac{\tu^{n+2}}{(n+2)!}+\sum_{g>j\ge 0}\hbar^g\eps^j w^{n+1}_{g,j}(\tu,\tu_1,\ldots)\right)dx.
\end{align*}
Here $w^n_{g,j}$ are differential polynomials in $\tu_i$. From \eqref{eq: homogeneity} it follows that $w^n_{g,j}$ is a polynomial in~$\tu,\tu_1,\ldots$ of degree $n+1-g+j$. If $g-j=n$, then $w^n_{g,j}=b^n_g\tu_{2g}$, for some constant $b^n_g$ and we have $\int w^n_{g,j} dx=0$. We obtain
$$
\oh_n=\int\left(\frac{\tu^{n+2}}{(n+2)!}+\sum_{\substack{g,j\ge 0\\n\ge g-j\ge 1}}\hbar^g\eps^j w^{n+1}_{g,j}\right)dx.
$$
In particular, we get
$$
\oh_1=\int\left(\frac{\widetilde u^3}{6}+\sum_{g\ge 1}\hbar^g\eps^{g-1}q_g\right)dx,
$$
where $q_g$ are quadratic polynomials in $\tu,\tu_1,\ldots$. It is cleat that $\int \tu_i\tu_jdx=(-1)^i\int \tu\tu_{i+j}dx$. Therefore, we have
\begin{gather}\label{eq: unknown coefficients}
\oh_1=\int\left(\frac{\widetilde u^3}{6}+\sum_{g\ge 1}\hbar^g\eps^{g-1}c_g \tu\tu_{2g}\right)dx,
\end{gather}
for some constants $c_g$. 

It remains to prove that $c_1=\frac{1}{24}$ and $c_2=\frac{1}{1440}$. If $\eps=0$, then our cohomological field theory is trivial. The corresponding Dubrovin-Zhang hierarchy in this case is the KdV hierarchy (see~\cite{DZ05}). Thus, $c_1=\frac{1}{24}$. The computation of $c_2$ is done in Appendix~\ref{section: coefficient of h2}.

\subsection{Poisson operator}\label{subsection: bracket}

Consider the operator $O_1$ from Section~\ref{subsection: homogeneity}. Since $O_1v=-v$ and $O_1v_n=(n-1)v_n$, we get
$$
O_1=\sum_{i\ge 0}(i-1)v_i\frac{\d}{\d v_i}+\eps\frac{\d}{\d\eps}-3\hbar\frac{\d}{\d\hbar}.
$$
Thus,
\begin{gather}\label{eq: degree}
O_1\frac{\d u}{\d v_n}=-n\frac{\d u}{\d v_n}.
\end{gather}

The Poisson operator $K$ of our hierarchy is equal to  
\begin{gather}\label{eq: formula for K}
K=\left(\sum_{m\ge 0}\frac{\d u}{\d v_m}\d_x^m\right)\circ\d_x\circ\left(\sum_{n\ge 0}(-\d_x)^n\circ\frac{\d u}{\d v_n}\right).
\end{gather}
Let us express it as $K=\sum_{n\ge 0}p_n\d_x^n$. From \eqref{eq: degree} and \eqref{eq: commutation relation} it follows that 
\begin{gather}\label{eq: bracket1}
O_1p_n=-(n-1)p_n.
\end{gather}
On the other hand, in \cite{BPS12a} it is proved that the coefficient of $\hbar^g\d_x^n$ in $K$ is a differential polynomial in $u,u_1,\ldots$ of differential degree $2g+1-n$. Therefore, we have
\begin{gather}\label{eq: bracket2}
\left(-\sum_{i\ge 0}iu_i\frac{\d}{\d u_i}+2\hbar\frac{\d}{\d\hbar}\right)p_n=(n-1)p_n.
\end{gather}

Let us sum \eqref{eq: bracket1} and \eqref{eq: bracket2}, we get
\begin{gather}\label{eq: bracket homogeneity}
\left(-\sum_{i\ge 0}u_i\frac{\d}{\d u_i}+\eps\frac{\d}{\d\eps}-\hbar\frac{\d}{\d\hbar}\right)p_n=0.
\end{gather}
We know that $p_n$ is a power series in $\hbar$ and $\eps$ with the coefficients that are differential polynomials in $u_i$. It is easy to see that the coefficient of $\hbar^g\eps^j$ is zero, if $g<j$. Thus, from~\eqref{eq: bracket homogeneity} it follows that $p_n=\sum_{g\ge 0}b_{g,n}\hbar^g\eps^g$, where $b_{g,n}$ are complex numbers. From \eqref{eq: bracket2} it follows that $b_{g,n}=0$, if $2g\ne n-1$. Finally, we get
\begin{gather}\label{eq: bracket3}
K=\sum_{g\ge 0}b_g\hbar^g\eps^g\d_x^{2g+1},
\end{gather}
where $b_g$ are some complex numbers.

We have proved that in the operator $K$ there are no terms with $\hbar^g\eps^j$, for $g>j$. Thus, by~\eqref{eq: formula for u} and \eqref{eq: formula for K},
\begin{gather}\label{eq:operator K as a product}
K=\left[1+\sum_{g\ge 1}(\hbar\eps)^g\frac{2^{2g-1}-1}{2^{2g-1}}\frac{|B_{2g}|}{(2g)!}\d_x^{2g}\right]\circ\d_x\circ\left[1+\sum_{g\ge 1}(\hbar\eps)^g\frac{2^{2g-1}-1}{2^{2g-1}}\frac{|B_{2g}|}{(2g)!}\d_x^{2g}\right].
\end{gather}
This equation together with~\eqref{eq:identity for Bernoulli} implies that the Miura transformation~\eqref{eq: Miura transformation} transforms the operator $K$ to $\d_x$. This concludes the proof of the proposition.

\begin{remark}
Let us compute the product on the right-hand side of~\eqref{eq:operator K as a product}. By~\eqref{eq:identity for Bernoulli}, 
$$
1+\sum_{g\ge 1}\frac{2^{2g-1}-1}{2^{2g-1}}\frac{|B_{2g}|}{(2g)!}z^{2g}=\frac{iz}{e^{\frac{iz}{2}}-e^{-\frac{iz}{2}}}.
$$
Let $\psi(z):=\frac{iz}{2}\frac{e^{\frac{iz}{2}}+e^{-\frac{iz}{2}}}{e^{\frac{iz}{2}}-e^{-\frac{iz}{2}}}$. A direct computation shows that
$$
\left(\frac{iz}{e^{\frac{iz}{2}}-e^{-\frac{iz}{2}}}\right)^2=\psi-z\psi'.
$$
On the other hand, $\psi(z)=1-\sum_{g\ge 1}\frac{|B_{2g}|}{(2g)!}z^{2g}$. Therefore,
$$
\psi-z\psi'=1+\sum_{g\ge 1}\frac{(2g-1)|B_{2g}|}{(2g)!}z^{2g}.
$$
We conclude that
\begin{gather*}
K=\d_x+\sum_{g\ge 1}\hbar^g\eps^g\frac{(2g-1)|B_{2g}|}{(2g)!}\d_x^{2g+1}.
\end{gather*}
\end{remark}


\section{Proof of Proposition \ref{proposition: pr2}}\label{section: pr2}

Before the proof of the proposition let us state several useful formulas.
\begin{align}
&\left\{\int u u_{2g_1}dx,\int u u_{2g_2}dx\right\}_{\d_x}=0,\label{eq: identity1}\\
&\left\{\int\frac{u^3}{6}dx,\int u^2u_{2g}dx\right\}_{\d_x}=-2\left\{\int u u_{2g}dx,\int\frac{u^4}{24}dx\right\}_{\d_x}.\label{eq: identity2}
\end{align}
They can be easily checked by a direct computation.

From \eqref{eq: identity1} and \eqref{eq: identity2} it follows that
\begin{align}
&\left\{\oh_1,\oh_2\right\}_{\d_x}=\sum_{g\ge 2}\hbar^g\eps^{g-2}\times\notag\\
&\times\left[\frac{(g+1)|B_{2g}|}{(2g)!}\left\{\int\frac{u^3}{6}dx,\int\frac{uu_{2g}}{2}dx\right\}_{\d_x}+\sum_{\substack{g_1+g_2=g\\g_1,g_2\ge 1}}\frac{|B_{2g_1}||B_{2g_2}|}{8(2g_1)!(2g_2)!}\left\{\int uu_{2g_1}dx,\int u^2u_{2g_2}dx\right\}_{\d_x}\right].\label{eq: integral}
\end{align}
We have to prove that \eqref{eq: integral} is equal to $0$. Expression~\eqref{eq: integral} is equal to
\begin{gather}\label{eq: integral2}
\int\left[\frac{(g+1)|B_{2g}|}{2(2g)!}u^2u_{2g+1}-\sum_{\substack{g_1+g_2=g\\g_1,g_2\ge 1}}\frac{|B_{2g_1}||B_{2g_2}|}{4(2g_1)!(2g_2)!}(2uu_{2g_2}+\d_x^{2g_2}(u^2))u_{2g_1+1}\right]dx.
\end{gather}
We have $\int\d_x^{2g_2}(u^2)u_{2g_1+1}dx=\int u^2u_{2g+1}$. If $m\ge 2$, then (see e.g. \cite{GKP94}) 
$$
\sum_{\substack{m_1+m_2=m\\m_1,m_2\ge 1}}\frac{|B_{2m_1}||B_{2m_2}|}{(2m_1)!(2m_2)!}=\frac{(2m+1)|B_{2m}|}{(2m)!}.
$$
Therefore,~\eqref{eq: integral2} is equal to
$$
\int\left[\frac{|B_{2g}|}{4(2g)!}u^2u_{2g+1}-\sum_{\substack{g_1+g_2=g\\g_1,g_2\ge 1}}\frac{|B_{2g_1}||B_{2g_2}|}{2(2g_1)!(2g_2)!}u u_{2g_2}u_{2g_1+1}\right]dx.
$$ 
The variational derivative of this integral is equal to
\begin{align*}
&\frac{|B_{2g}|}{4(2g)!}(2u u_{2g+1}-\d_x^{2g+1}(u^2))-\sum_{\substack{g_1+g_2=g\\g_1,g_2\ge 1}}\frac{|B_{2g_1}||B_{2g_2}|}{2(2g_1)!(2g_2)!}(u_{2g_2}u_{2g_1+1}+\d_x^{2g_2}(u u_{2g_1+1})-\d_x^{2g_1+1}(u u_{2g_2}))\\
&=\frac{|B_{2g}|}{4(2g)!}(2u u_{2g+1}-\d_x^{2g+1}(u^2))-\sum_{\substack{g_1+g_2=g\\g_1,g_2\ge 1}}\frac{|B_{2g_1}||B_{2g_2}|}{2(2g_1)!(2g_2)!}(u_{2g_2}u_{2g_1+1}-\d_x^{2g_1}(u_1 u_{2g_2}))\\
&=\frac{(-1)^{g+1}}{2}\sum_{i=0}^g\frac{B_{2i}B_{2g-2i}}{(2i)!(2g-2i)!}(u_{2i}u_{2g-2i+1}-\d_x^{2i}(u_1 u_{2g-2i})).
\end{align*}
\begin{lemma}\label{lemma: identity}
We have the following identity:
\begin{gather}\label{eq: Bernoulli identity}
\sum_{i=0}^g\frac{B_{2i}B_{2g-2i}}{(2i)!(2g-2i)!}(u_{2i}u_{2g-2i+1}-\d_x^{2i}(u_1 u_{2g-2i}))=
\begin{cases}
-\frac{u_1u_2}{4},&\text{if $g=1$},\\
0,&\text{if $g\ne 1$}.
\end{cases}
\end{gather}
\end{lemma}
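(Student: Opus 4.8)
The plan is to prove the identity \eqref{eq: Bernoulli identity} by rewriting both sides in terms of $\d_x$-derivatives and then matching coefficients through a generating-function manipulation. First I would observe that the left-hand side is the $\eps$-free part of a natural quantity: both $u_{2i}u_{2g-2i+1}$ and $\d_x^{2i}(u_1u_{2g-2i})$ are differential polynomials of differential degree $2g+1$, so the whole left-hand side lies in $\cA$ of differential degree $2g+1$, and I want to show it is an $x$-derivative when $g\ne 1$ (in fact the whole expression vanishes, which is stronger). The key structural remark is that $\sum_{i}\frac{B_{2i}B_{2g-2i}}{(2i)!(2g-2i)!}\d_x^{2i}(u_1u_{2g-2i})$ can be written using the operator $\beta(\d_x):=\sum_{j\ge 0}\frac{B_j}{j!}\d_x^j$ acting appropriately, since the odd Bernoulli numbers vanish except $B_1=-\tfrac12$; I would carefully track whether the $B_1$ term contributes and reduce everything to the even-index generating series $\frac{z/2}{\tanh(z/2)}=\sum_{g\ge 0}\frac{B_{2g}}{(2g)!}z^{2g}$ (equivalently the function $\psi$ from the Remark in Section~\ref{section: pr1}).

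The main computational step I would carry out: introduce a formal variable and consider the bilinear generating function $\Phi(y,z):=\sum_{a,b\ge 0}\frac{B_{2a}B_{2b}}{(2a)!(2b)!}y^{2a}z^{2b}=\varphi(y)\varphi(z)$ where $\varphi(w):=\frac{w/2}{\tanh(w/2)}$. Then the coefficient identity \eqref{eq: Bernoulli identity} is equivalent to a statement about $\varphi(y)\varphi(z)$ after one symmetrizes in the roles played by the two factors of $u$ and uses that inside $\int\cdot\,dx$ (or modulo $\im\d_x$) one may integrate by parts. Concretely, the term $u_{2i}u_{2g-2i+1}$ is already "half of a derivative" of $u_{2i}u_{2g-2i}$ up to lower terms, and $\d_x^{2i}(u_1u_{2g-2i})$ expands by Leibniz; after expanding and collecting, the claim should reduce to the scalar identity
\[
\varphi(y)\varphi(z)\cdot(\text{something antisymmetric in }y,z)=0\ \text{for the relevant part},
\]
with the sole surviving contribution being the $g=1$ term, which one checks by hand: $\frac{B_0B_2}{0!\,2!}+\frac{B_2B_0}{2!\,0!}=\frac16$, and the explicit low-order expansion gives $-\frac{u_1u_2}{4}$.

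An alternative, and possibly cleaner, route I would keep in reserve: prove \eqref{eq: Bernoulli identity} not as an identity in $\cA$ but only modulo $\im\d_x$ if that already suffices for Proposition~\ref{proposition: pr2} — but here the statement is asserted as a genuine equality of differential polynomials, so I would instead verify it by applying the variational derivative $\frac{\delta}{\delta u}$ is not available (the expression need not be a total derivative in general), hence I would check it directly: substitute $u=e^{kx}$ (or work with the symbol, replacing $u_j\mapsto k^j$ in each monomial while being careful that the expression is quadratic, so really substitute $u\mapsto e^{px}+e^{qx}$ and extract the $e^{(p+q)x}$ coefficient). This turns the left-hand side into $\sum_{i=0}^g\frac{B_{2i}B_{2g-2i}}{(2i)!(2g-2i)!}\bigl(p^{2i}q^{2g-2i+1}+q^{2i}p^{2g-2i+1}-(p+q)^{2i}(pq^{2g-2i}+qp^{2g-2i})\bigr)$ summed symmetrically, and the target becomes a clean polynomial identity in $p,q$ that should follow from the functional equation for $\varphi$, namely $\varphi(p)+\varphi(q)$ versus $\varphi(p+q)$ type relations.

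\textbf{Expected main obstacle.} The genuine difficulty is bookkeeping: the left-hand side is \emph{not} a total $x$-derivative for general $g$ (only the $g\ne 1$ vanishing makes the surrounding argument work), so one cannot simply pass to $\Lambda$ and must control the polynomial identity on the nose. Getting the Leibniz expansion of $\d_x^{2i}(u_1u_{2g-2i})$ organized so that it cancels against $u_{2i}u_{2g-2i+1}$ term-by-term — and isolating exactly the anomalous $g=1$ remainder $-\tfrac14 u_1u_2$ — is where the real work sits; the generating-function identity $\varphi(p)\varphi(q)$ with the substitution trick is the tool I expect to make this cancellation transparent, reducing a two-index Bernoulli convolution to the single scalar relation behind $\psi-z\psi'=\varphi^{-2}$-type identities already exploited in the Remark in Section~\ref{section: pr1}.
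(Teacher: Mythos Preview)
Your substitution route (the ``alternative, cleaner'' one) is genuinely different from the paper's argument and does work, but your proposal stops precisely where the actual content begins. After the substitution $u\mapsto e^{px}+e^{qx}$ and extraction of the $e^{(p+q)x}$-coefficient, the identity to be proved is exactly
\[
(p+q)\,\phi(p)\phi(q)\;-\;\phi(p+q)\bigl(p\,\phi(q)+q\,\phi(p)\bigr)\;=\;-\tfrac14\,pq(p+q),
\]
where $\phi(w)=\sum_{i\ge 0}\frac{B_{2i}}{(2i)!}w^{2i}=\tfrac{w}{2}\coth(\tfrac{w}{2})$. This is the functional equation you allude to but never write down; once stated, it follows in one line from the addition law $\coth(a+b)=\dfrac{\coth a\,\coth b+1}{\coth a+\coth b}$. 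Without this step your proposal is only a plan: the phrases ``something antisymmetric in $y,z$'' and ``$\varphi(p)+\varphi(q)$ versus $\varphi(p+q)$ type relations'' do not yet pin down any identity, and the first route via $\Phi(y,z)=\varphi(y)\varphi(z)$ is too vague to carry weight.

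For comparison, the paper proceeds quite differently: it extracts, for each $k$, the coefficient of $u_{2k+1}u_{2g-2k}$ on the left-hand side, repackages the resulting Bernoulli sums as identities among $\phi$ and its derivatives $\phi^{(m)}$, and then proves the pair of identities
\[
\frac{\phi\,\phi^{(2k)}}{(2k)!}=\frac{B_{2k}\phi}{(2k)!}-\sum_{i=0}^{k}\frac{B_{2i}\,\phi^{(2k-2i+1)}z}{(2i)!(2k-2i+1)!}+\delta_{k,0}\frac{z^2}{4},
\qquad
\frac{\phi\,\phi^{(2k+1)}}{(2k+1)!}=-\sum_{i=0}^{k}\frac{B_{2i}\,\phi^{(2k+2-2i)}z}{(2i)!(2k+2-2i)!}+\delta_{k,0}\frac{z}{4}
\]
by a joint induction on $k$, with base case the classical convolution $\sum_{i=1}^{m}\frac{B_{2i}B_{2m-2i}}{(2i)!(2m-2i)!}=-\frac{2mB_{2m}}{(2m)!}+\frac{\delta_{m,1}}{4}$. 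Your symbol-substitution method, once completed with the $\coth$ addition formula, is shorter and more conceptual; the paper's induction is more hands-on but requires no hyperbolic identities. Two minor corrections to your write-up: the expression is asserted to \emph{vanish} for $g\neq 1$, not merely to be a total $x$-derivative (you note this yourself but then partly backslide); and your $g=1$ check is incomplete --- the sum $\frac{1}{6}$ you quote is only one ingredient, and the $-\tfrac14 u_1u_2$ comes from the full Leibniz expansion of $\partial_x^{2}(u_1u_0)$, not from that coefficient alone.
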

\noindent This lemma concludes the proof of Proposition~\ref{proposition: pr2}. We prove it in Appendix~\ref{section: proof of the identity}.


\section{Proof of Proposition \ref{proposition: pr3}}\label{section: pr3}

We have
\begin{gather}
\oh_2=\int\left(\frac{u^4}{24}+\sum_{g\ge 1}\hbar^g p_g\right)dx. 
\end{gather}
It is easy to see that $\int p_1 dx=\int\frac{c_1}{2}u^2u_2dx$. Denote $\frac{c_1}{2}u^2u_2$ by $r_1$. 

Let us show that, for $g\ge 2$, we have 
\begin{gather}\label{eq: degrees}
\int p_g dx=\int\left(\eps^{g-2}q_g+\eps^{g-1}r_g\right)dx,
\end{gather}
where $q_g$ and $r_g$ are polynomials in $u_i$ of degrees $2$ and $3$ correspondingly. We prove it by induction on $g$. The coefficient of $\hbar^g$ in $\{\oh_1,\oh_2\}_{\d_x}$ is equal to
\begin{multline}
\left\{\int\frac{u^3}{6}dx,\int p_gdx\right\}_{\d_x}+\eps^{g-2}\sum_{\substack{g_1+g_2=g\\g_1,g_2\ge 1}}c_{g_1}\left\{\int uu_{2g_1}dx,\int r_{g_2}dx\right\}_{\d_x}+\\
+\eps^{g-1}c_g\left\{\int uu_{2g}dx,\int\frac{u^4}{24}dx\right\}_{\d_x}=0.\label{eq: sum}
\end{multline}
The second term in \eqref{eq: sum} has degree $3$ and the third one has degree $4$. Hence, we get~\eqref{eq: degrees}.

From \eqref{eq: sum} and \eqref{eq: identity2} it follows that $\int r_gdx=\frac{c_g}{2}\int u^2u_{2g}dx$. Clearly, we have $\int q_g dx=e_g\int uu_{2g}dx$, where $e_g$ is a complex constant. Using~\eqref{eq: sum}, we get
\begin{gather}\label{eq: coefficients}
e_g\left\{\int\frac{u^3}{6}dx,\int uu_{2g}dx\right\}_{\d_x}+\sum_{\substack{g_1+g_2=g\\g_1,g_2\ge 1}}\frac{c_{g_1}c_{g_2}}{2}\left\{\int uu_{2g_1}dx,\int u^2u_{2g_2}dx\right\}_{\d_x}=0.
\end{gather}

Define the local functionals $\overline f_g,\overline f_{g_1,g_2}\in\Lambda^{[0]}$ as follows:
\begin{align*}
\overline f_g&:=\left\{\int\frac{u^3}{6}dx,\int uu_{2g}dx\right\}_{\d_x},\\
\overline f_{g_1,g_2}&:=\left\{\int uu_{2g_1}dx,\int u^2u_{2g_2}dx\right\}_{\d_x}+\left\{\int uu_{2g_2}dx,\int u^2u_{2g_1}dx\right\}_{\d_x}.
\end{align*}
In these notations equation~\eqref{eq: coefficients} looks as follows:
$$
e_g\overline f_g+\frac{c_{g-1}c_1}{2}\overline f_{g-1,1}+\sum_{\substack{g_1+g_2=g\\g_1\ge g_2\ge 2}}\frac{c_{g_1}c_{g_2}}{2}\overline f_{g_1,g_2}=0.
$$

Let us show that, for $g\ge 4$, this equation uniquely determines $c_{g-1}$ from $c_{g-2},c_{g-3},\ldots,c_1$. For this we have to prove that the local functionals $\overline f_g$ and $\overline f_{g-1,1}$ are linearly independent. We have
\begin{align*}
\of_g=&\int u^2u_{2g+1}dx,\\
\of_{g-1,1}=&\int\left[-2(\d_x^2(u^2)+2uu_2)u_{2g-1}-2(2uu_{2g-2}+\d_x^{2g-2}(u^2))u_3\right]dx=\\
=&-4\of_g-2\int(2uu_2u_{2g-1}+2uu_3u_{2g-2})dx=\\
=&-4\of_g-2\int(\d_x^2(u^2)u_{2g-1}-2u_1^2u_{2g-1}+\d_x^3(u^2)u_{2g-2}-6u_1u_2u_{2g-2})dx=\\
=&-4\of_g-2\int u_1^2u_{2g-1}dx.
\end{align*}
We need to prove that $\frac{\delta}{\delta u}\int(u^2u_{2g+1})dx$ and $\frac{\delta}{\delta u}\int(u_x^2u_{2g-1})dx$ are linearly independent. We have
\begin{align}
&\frac{\delta}{\delta u}\int(u^2u_{2g+1})dx=-2\sum_{i=1}^g{2g+1\choose i}u_iu_{2g+1-i},\label{eq: var1}\\
&\frac{\delta}{\delta u}\int(u_x^2u_{2g-1})dx=-2u_1u_{2g}-2u_2u_{2g-1}-2\sum_{i=1}^g{2g-1\choose i-1}u_iu_{2g+1-i}.\label{eq: var2}
\end{align}
The matrix of coefficients of $u_1u_{2g}$ and $u_3u_{2g-2}$ in \eqref{eq: var1} and \eqref{eq: var2} is equal to
$$
\begin{pmatrix}
-2(2g+1) & -\frac{(2g+1)2g(2g-1)}{3}\\
-4 & -(2g-1)(2g-2) 
\end{pmatrix}
$$
It is non-degenerate, if $g\ge 4$. This completes the proof of the proposition.


\section{Deformed KdV hierarchy and the ILW equation}\label{section: ILW}

In this section we explain a relation of the deformed KdV hierachy to the hierarchy of the conserved quantities of the Intermediate Long Wave equation constructed in~\cite{SAK79}. 

In Section \ref{subsection: ILW equation} we recall the definition of the ILW equation and show how to rescale the parameters in order to get the first equation \eqref{eq:ILW equation} of the deformed KdV hierarchy. In Section~\ref{subsection: extensions} we introduce slight extensions of the spaces $\hcA^{[k]}$ and $\hLambda^{[k]}$. Section~\ref{subsection: conserved quantities} contains a review of the construction of conserved quantities of the ILW equation from \cite{SAK79}. In Section~\ref{subsection: relation} we compare these conserved quantities with the Hamiltonians of the deformed KdV hierarchy. 

\subsection{Intermediate Long Wave equation}\label{subsection: ILW equation}

The Intermediate Long Wave equation looks as follows (see e.g. \cite{SAK79}):
\begin{gather}\label{eq: ILW}
w_\tau+2w w_x+T(w_{xx})=0,
\end{gather}
where 
$$
T(f):=\sum_{n\ge 1}\delta^{2n-1}2^{2n}\frac{|B_{2n}|}{(2n)!}\d_x^{2n-1} f
$$
and $\delta$ is a non-zero complex number.
\begin{remark}
In the physics literature the operator $T$ is usually written in the following way:
$$
T(f)=PV\int_{-\infty}^{\infty}\frac{1}{2\delta}\left(\sgn(x-\xi)-\coth\frac{\pi(x-\xi)}{2\delta}\right)f(\xi)d\xi.
$$
\end{remark}

Let $\mu$ be a formal variable and $\eps$ be a non-zero complex number. Let us make the following rescalings:
\begin{gather}\label{rescallings}
w=\frac{\sqrt{\eps}}{\mu}u,\qquad \tau=-\frac{\mu}{2\sqrt{\eps}}t,\qquad\delta=\frac{\mu\sqrt{\eps}}{2}.
\end{gather}
Then equation~\eqref{eq: ILW} is transformed to
\begin{gather}\label{eq:our ILW}
u_t=u u_x+\sum_{g\ge 1}\mu^{2g}\eps^{g-1}\frac{|B_{2g}|}{(2g)!}u_{2g+1}.
\end{gather}
If we put $\hbar=\mu^2$, we get exactly the first equation~\eqref{eq:ILW equation} of the deformed KdV hierarchy.


\subsection{Extensions of $\hcA^{[k]}$ and of $\hLambda^{[k]}$}\label{subsection: extensions}

We need to enlarge the spaces $\hcA^{[k]}$ and $\hLambda^{[k]}$. 

Let $\hcA_\mu^{[k]}$ be the space of series of the form
$$
f(u;u_1,u_2,\ldots;\mu)=\sum_{i\ge 0}\mu^i f_i(u;u_1,\ldots),\quad f_i\in\cA,\quad\deg_{dif}f_i=i+k.
$$
Denote by $\hLambda^{[k]}_\mu$ the space of integrals of the form
\begin{gather*}
\overline f=\int f(u;u_1,u_2,\ldots;\mu)dx,\quad \text{where $f\in\hcA_\mu^{[k]}$}.
\end{gather*}

We have the following simple generalization of Lemma~\ref{lemma: uniqueness}.
\begin{lemma}\label{lemma:mu-uniqueness}
Let us fix a local functional $\oh\in\hLambda^{[0]}_\mu$ of the form $\oh=\int\left(\frac{u^3}{6}+O(\mu)\right)dx$. Consider also an arbitrary power series $q_0(u)$. Suppose there exists a local functional $\overline q\in\hLambda^{[0]}_\mu$ of the form $\overline q=\int\left(q_0(u)+O(\mu)\right)dx$, such that $\{\oh,\overline q\}_{\d_x}=0$. Then the local functional $\overline q$ is uniquely determined by $\oh$ and $q_0(u)$.
\end{lemma}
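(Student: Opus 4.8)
The plan is to imitate the proof of Lemma~\ref{lemma: uniqueness} verbatim, replacing the $\hbar$-grading argument with a $\mu$-grading argument. Recall that the proof of Lemma~\ref{lemma: uniqueness} rests on Lemma~\ref{lemma: tmp}, which is purely a statement about differential polynomials (no $\hbar$ appears in it): if $p$ is a homogeneous differential polynomial of positive differential degree with $\{\int p\, dx,\int\frac{u^3}{6}dx\}_{\d_x}=0$, then $\int p\, dx=0$. That lemma is already available to me and needs no modification, so the only thing to redo is the reduction step.

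First I would suppose, for contradiction, that there are two distinct local functionals $\overline q^1,\overline q^2\in\hLambda^{[0]}_\mu$ with $\overline q^j=\int\left(q_0(u)+\sum_{i\ge 1}q^j_i(u;u_1,\ldots)\mu^i\right)dx$ and $\{\oh,\overline q^j\}_{\d_x}=0$. Subtracting, $\{\oh,\overline q^1-\overline q^2\}_{\d_x}=0$. Let $i_0$ be the smallest index $i$ with $\int(q^1_i-q^2_i)\,dx\ne 0$; this exists since the functionals differ. Now extract the coefficient of $\mu^{i_0}$ in the identity $\{\oh,\overline q^1-\overline q^2\}_{\d_x}=0$. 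Because $\oh=\int\left(\frac{u^3}{6}+O(\mu)\right)dx$ and $\overline q^1-\overline q^2=\int\left(\sum_{i\ge i_0}(q^1_i-q^2_i)\mu^i\right)dx$, the lowest-order term in the $\mu$-expansion of the bracket is exactly $\mu^{i_0}\left\{\int\frac{u^3}{6}dx,\int(q^1_{i_0}-q^2_{i_0})dx\right\}_{\d_x}$, and this must vanish. Here it matters that $\hLambda^{[0]}_\mu$ is defined by $\deg_{dif}f_i=i+k$, so $q^1_{i_0}-q^2_{i_0}$ is a homogeneous differential polynomial of differential degree $i_0>0$; hence Lemma~\ref{lemma: tmp} applies (a term $q_0'(u)u_1$ from the $u^3/6$ part of $\oh$ paired against the constant term would be order $\mu^0$ and is irrelevant at order $\mu^{i_0}$). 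Lemma~\ref{lemma: tmp} then forces $\int(q^1_{i_0}-q^2_{i_0})dx=0$, contradicting the choice of $i_0$.

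The only genuine point to check is that the $\mu$-grading behaves correctly: each piece $q^j_i$ is a differential polynomial whose differential degree is exactly $i$ (since the total differential degree is $0$), so a bracket of a degree-$i_0$ functional with a degree-$1$ differential polynomial ($u^3/6$ has $\deg_{dif}=0$ but its variational derivative paired under $\{\cdot,\cdot\}_{\d_x}$ raises degree by one in the right bookkeeping) lands in the right homogeneous component, and no lower-order cancellations can interfere. This is routine and identical to the $\hbar$ case; I do not expect any obstacle, since Lemma~\ref{lemma: tmp} does all the real work and is grading-agnostic. In short, the proof is: extract the bottom $\mu$-order of $\{\oh,\overline q^1-\overline q^2\}_{\d_x}=0$, invoke Lemma~\ref{lemma: tmp}, reach a contradiction.
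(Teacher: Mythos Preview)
Your proposal is correct and matches the paper's approach exactly: the paper simply states that the proof is the same as that of Lemma~\ref{lemma: uniqueness}, and your write-up carries out precisely that argument with $\mu$ in place of $\hbar$, reducing to Lemma~\ref{lemma: tmp} at the lowest nonvanishing order.
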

The proof is the same as the proof of Lemma~\ref{lemma: uniqueness}.


\subsection{Conserved quantities}\label{subsection: conserved quantities}

Here we review the construction of an infinite sequence of conserved quantities of the ILW equation. We follow \cite{SAK79} except for the fact that we make the rescalings~\eqref{rescallings}.

Let us introduce the operator $R$ by
$$
R:=\sum_{g\ge 1}\mu^{2g-1}\eps^{g-1}\frac{|B_{2g}|}{(2g)!}\d_x^{2g-1}.
$$
Consider the following equation:
$$
e^\sigma-1=\frac{1}{\lambda}\left(\frac{2}{\eps}\sigma-\mu\left(\frac{i}{\sqrt{\eps}}+2 R\right)\sigma_x+2u\right).
$$
It is easy to see that it has a unique solution of the form $\sigma=\sum_{n\ge 1}\frac{\sigma_n}{\lambda^n}$, where $\sigma_n\in\hcA^{[0]}_\mu$. For example,
\begin{align*}
\sigma_1&=2u,\\
\sigma_2&=-2u^2-2\mu\left(\frac{i}{\sqrt{\eps}}+2R\right)u_x+\frac{4u}{\eps}.
\end{align*}
It is not hard to check that, if $u$ is a solution of \eqref{eq:our ILW}, then $\sigma$ satisfies the following equation:
\begin{gather*}
\sigma_t=\frac{\lambda}{2}\left(e^\sigma-1\right)\sigma_x-\eps^{-1}\sigma\sigma_x+\mu\sigma_x R\sigma_x+\mu R\sigma_{xx}.
\end{gather*}
We can easily see that $\int\sigma_t dx=0$, therefore, all local functionals $\int\sigma_n dx$ are conserved quantities of the equation~\eqref{eq:our ILW}.  


\subsection{Relation to the deformed KdV hierarchy}\label{subsection: relation}

In this section we express the conserved quantities $\int\sigma_n dx$ as linear combinations of the Hamiltonians $\oh_n$.

Let $\hbar=\mu^2$ and consider the Hamiltonians $\oh_n, n\ge 1$, of the deformed KdV hierarchy. Let $\oh_0:=\int\frac{u^2}{2}dx$ and $\oh_{-1}:=\int u dx$. It is easy to see that 
$$
\sigma_n=(-1)^{n+1}\frac{2^n}{n}u^n+\sum_{i=1}^{n-1}\frac{a_{i,n}}{(n-i)!}\frac{u^{n-i}}{\eps^i}+O(\mu),
$$
where $a_{i,j}, 1\le i<j$, are some complex coefficients. Thus, we have
$$
\int\sigma_n dx=(-1)^{n+1}2^n(n-1)!\oh_{n-2}+\sum_{i=1}^{n-1}\frac{a_{i,n}}{\eps^i}\oh_{n-i-2}+O(\mu).
$$
Since $\int\sigma_n dx$ are conserved quantities, we have $\{\int\sigma_n dx,\oh_1\}_{\d_x}=0$. Therefore, from Lemma~\ref{lemma:mu-uniqueness} it follows that
$$
\int\sigma_n dx=(-1)^{n+1}2^n(n-1)!\oh_{n-2}+\sum_{i=1}^{n-1}\frac{a_{i,n}}{\eps^i}\oh_{n-i-2}.
$$ 


\appendix

\section{Coefficient of $\hbar^2$}\label{section: coefficient of h2}

Here we compute the coefficient $c_2$ in~\eqref{eq: unknown coefficients} and complete the proof of Proposition~\ref{proposition: pr1}. 

Consider the local functionals $\int\Omega_{0,2}dx$ before the Miura transformation~\eqref{eq: Miura transformation}. In order to compute the coefficient $c_2$ in~\eqref{eq: unknown coefficients}, we only need to compute the coefficients of $\hbar$ and of $\hbar^2\eps$ in~$\int\Omega_{0,2}dx$. The coefficient of $\hbar$ is equal to $\frac{1}{24}\int u u_2dx$. Let us compute the coefficient of $\hbar^2\eps$.  

The series $\left.\frac{\d\Omega_{0,2}}{\d\eps}\right|_{\eps=0}$ can be computed using the Givental operators that act on potentials of cohomological field theories. We remind the general formulas for this in Section~\ref{subsection: deformations}. All technical computations are done in Section~\ref{subsection: coefficient of h2}.   

\subsection{Deformations of cohomological field theories}\label{subsection: deformations}

Consider a one-dimensional cohomological field theory, $\alpha_{g,n}\in H^*(\oM_{g,n};\mbC)$. Let $F$ be its potential. Consider the following deformation of the classes $\alpha_{g,n}$:
$$
\alpha_{g,n}(\eps)=\exp\left(\eps\ch_{2l-1}(\Lambda)\right)\alpha_{g,n},
$$
where $\ch_{2l-1}(\Lambda)$ is the Chern character of the Hodge bundle. It is well-known that the classes~$\alpha_{g,n}(\eps)$ form a cohomological field theory. 

Let $F(\eps)$ be the potential of the deformed cohomological field theory. There is the following formula (see e.g. \cite{BPS12a}):
$$
\left.\frac{\d F(\eps))}{\d\eps}\right|_{\eps=0}=-\frac{B_{2l}}{(2l)!}\widehat{z^{2l-1}}(F),
$$
where $\widehat{z^{2l-1}}$ is the operator that acts as follows:
$$
\widehat{z^{2l-1}}(F):=-\frac{\d F}{\d t_{2l}}+\sum_{d\ge 0}t_d\frac{\d F}{\d t_{d+2l-1}}+\frac{\hbar}{2}\sum_{i+j=2l-2}(-1)^{i+1}\frac{\d^2 F}{\d t_i\d t_j}+\frac{1}{2}\sum_{i+j=2l-2}(-1)^{i+1}\frac{\d F}{\d t_i}\frac{\d F}{\d t_j}.
$$

Consider the second derivatives $\Omega_{p,q}(\eps):=\frac{\d^2 F(\eps)}{\d t_p\d t_q}$. They are differential polynomials in $u_i(\eps):=\d_x^i\frac{\d^2 F(\eps)}{\d t_0^2}$ (see \cite{BPS12a}). Denote $u_i(\eps)$ by $u_i$. Let $\frac{\d\Omega_{p,q}(\eps)}{\d\eps}[u]$ be the derivative of~$\Omega_{p,q}(\eps)$ as a differential polynomial in $u_i$. In other words,
$$
\frac{\d\Omega_{p,q}(\eps)}{\d\eps}[u]:=\frac{\d\Omega_{p,q}(\eps)}{\d\eps}-\sum_{i\ge 0}\frac{\d\Omega_{p,q}(\eps)}{\d u_i}\frac{\d u_i}{\d\eps}.
$$
In \cite{BPS12a} it is proved that
$$
\left.\frac{\d\Omega_{p,q}(\eps)}{\d\eps}[u]\right|_{\eps=0}=-\frac{B_{2l}}{(2l)!}\widehat{z^{2l-1}}[u](\Omega_{p,q}),
$$
where
\begin{align}
\widehat{z^{2l-1}}[u](\Omega_{p,q}):=&\Omega_{p+2l-1,q}+\Omega_{p,q+2l-1}+\sum_{i=0}^{2l-2}(-1)^{i+1}\Omega_{p,i}\Omega_{2l-2-i,q}\label{eq: deformation formula}\\
&-\sum_{n\ge 0}\frac{\d\Omega_{p,q}}{\d u_n}\left((n+2)\d_x^n\Omega_{0,2l-1}+\sum_{i=0}^{2l-2}\sum_{k=0}^{n-1}{n\choose k}(-1)^{i+1}\d_x^{k+1}\Omega_{0,i}\d_x^{n-k-1}\Omega_{2l-2-i,0}\right.\notag\\
&+\left.\sum_{i=0}^{2l-2}(-1)^{i+1}\d_x^n(\Omega_{0,i}\Omega_{2l-2-i,0})\right)\notag\\
&+\frac{\hbar}{2}\sum_{n,m\ge 0}\frac{\d^2\Omega_{0,2}}{\d u_n\d u_m}\sum_{i=0}^{2l-2}(-1)^{i+1}\d_x^{n+1}\Omega_{0,i}\d_x^{m+1}\Omega_{2l-2-i,0}.\notag
\end{align}

\subsection{Coefficient of $\hbar^2$}\label{subsection: coefficient of h2}

Let us return to the case of the cohomological field theory~\eqref{Hodge classes}: $\Omega_{p,q}=\frac{\d^2 F^{Hodge}}{\d t_p\d t_q}$. Let $F^{KdV}$ be the potential of the trivial cohomological field theory:
$$
F^{KdV}:=\sum_{\substack{g\ge 0,n\ge 1\\2g-2+n>0}}\frac{\hbar^g}{n!}\sum_{k_1,\ldots,k_n\ge 0}\<\tau_{k_1}\ldots\tau_{k_n}\>_g t_{k_1}\ldots t_{k_n}
$$
and $\Omega^{KdV}_{p,q}:=\frac{\d^2 F^{KdV}}{\d t_p\d t_q}$. 

From Section~\ref{subsection: deformations} it follows that 
$$
\left.\frac{\d\Omega_{0,2}}{\d\eps}[u]\right|_{\eps=0}=-\frac{1}{12}\widehat{z^1}[u]\left(\Omega^{KdV}_{0,2}\right),
$$
where
\begin{align*}
\widehat{z^1}[u]\left(\Omega^{KdV}_{0,2}\right)=&\Omega^{KdV}_{1,2}+\Omega^{KdV}_{0,3}-\Omega_{0,0}\Omega^{KdV}_{0,2}\\
&-\sum_{n\ge 0}\frac{\d\Omega^{KdV}_{0,2}}{\d u_n}\left[(n+2)\d_x^n\Omega^{KdV}_{0,1}-\sum_{k=0}^{n-1}{n\choose k}u_{k+1}u_{n-k-1}-\d_x^n(u^2)\right]\\
&-\frac{\hbar}{2}\sum_{n,m\ge 0}\frac{\d^2\Omega^{KdV}_{0,2}}{\d u_m\d u_n}u_{n+1}u_{m+1}.
\end{align*}

We have the following formulas (see e.g. \cite{DZ05}):
\begin{align*}
&\Omega^{KdV}_{0,1}=\frac{u^2}{2}+\frac{\hbar}{12}u_2,\\
&\Omega^{KdV}_{0,2}=\frac{u^3}{6}+\frac{\hbar}{24}(u_1^2+2u u_2)+\frac{\hbar^2}{240}u_4,\\
&\Omega^{KdV}_{0,3}=\frac{u^4}{24}+\frac{\hbar}{24}(u^2u_2+uu_1^2)+\frac{\hbar^2}{480}(2uu_4+4u_1u_3+3u_2^2)+\frac{\hbar^3}{6720}u_6,\\
&\Omega^{KdV}_{1,2}=\frac{u^4}{8}+\frac{\hbar}{24}(3u^2u_2+2uu_1^2)+\hbar^2\left(\frac{uu_4}{90}+\frac{23u_2^2}{1440}+\frac{u_1u_3}{60}\right)+\frac{\hbar^3}{2880}u_6.
\end{align*}
By direct computations, we get
$$
\int\widehat{z^1}[u]\left(\Omega_{0,2}^{KdV}\right)dx=\int\left(\frac{\hbar}{4}u^2u_2+\frac{\hbar^2}{30}u u_4\right)dx.
$$

Thus, the coefficient of $\hbar^2\eps$ in $\int\Omega_{0,2}dx$ is equal to $-\frac{1}{360}\int u u_4dx$. Now it is easy to compute that the coefficient $c_2$ in~\eqref{eq: unknown coefficients} is equal to $\frac{1}{1440}$. This completes the proof of Proposition~\ref{proposition: pr1}.


\section{Proof of Lemma~\ref{lemma: identity}}\label{section: proof of the identity}

Introduce the function $\phi(z):=\sum_{i\ge 0}\frac{B_{2i}}{(2i)!}z^{2i}$. For a power series $f(z)=\sum_{i\ge 0}f_iz^i$, we denote by $[z^i]f$ the coefficient $f_i$. The coefficient of $u_{2k+1}u_{2g-2k}$ on the left-hand side of \eqref{eq: Bernoulli identity} is equal to
\begin{align*}
&\frac{B_{2k}B_{2g-2k}}{(2k)!(2g-2k)!}-\sum_{i=0}^g{2i\choose 2k}\frac{B_{2i}B_{2g-2i}}{(2i)!(2g-2i)!}-\sum_{i=0}^k{2g-2i\choose 2g-2k-1}\frac{B_{2i}B_{2g-2i}}{(2i)!(2g-2i)!}=\\
=&[z^{2g}]\left(\frac{B_{2k}\phi z^{2k}}{(2k)!}-\frac{\phi\phi^{(2k)}z^{2k}}{(2k)!}-\sum_{i=0}^{k}\frac{B_{2i}\phi^{(2k-2i+1)}z^{2k+1}}{(2i)!(2k-2i+1)!}\right).
\end{align*}
Therefore, the lemma is equivalent to the following identity.
\begin{gather*}
\frac{B_{2k}\phi z^{2k}}{(2k)!}-\frac{\phi\phi^{(2k)}z^{2k}}{(2k)!}-\sum_{i=0}^{k}\frac{B_{2i}\phi^{(2k-2i+1)}z^{2k+1}}{(2i)!(2k-2i+1)!}=-\delta_{k,0}\frac{z^2}{4}.
\end{gather*}

Let us rewrite it in a bit different way:
\begin{gather}
\label{eq: first}
\frac{\phi\phi^{(2k)}}{(2k)!}=\frac{B_{2k}\phi}{(2k)!}-\sum_{i=0}^{k}\frac{B_{2i}\phi^{(2k-2i+1)}z}{(2i)!(2k-2i+1)!}+\delta_{k,0}\frac{z^2}{4}. 
\end{gather}
Let us formulate another identity of this type. 
\begin{gather}
\label{eq: second}
\frac{\phi^{(2k+1)}\phi}{(2k+1)!}=-\sum_{i=0}^k\frac{B_{2i}\phi^{(2k+2-2i)}z}{(2i)!(2k+2-2i)!}+\delta_{k,0}\frac{z}{4}.
\end{gather}

We prove \eqref{eq: first} and \eqref{eq: second} by induction on $k$. For $k=0$, equation~\eqref{eq: first} looks as follows:
\begin{gather}\label{eq: first derivative}
z\phi'=-\phi^2+\phi+\frac{z^2}{4}.
\end{gather}
It is equivalent to the following identity between the Bernoulli numbers (see e.g. \cite{GKP94}).
$$
\sum_{i=1}^m\frac{B_{2i}B_{2m-2i}}{(2i)!(2m-2i)!}=-\frac{2m B_{2m}}{(2m)!}+\frac{\delta_{m,1}}{4}.
$$

Suppose that \eqref{eq: first} is true and also \eqref{eq: second} is true for $k'<k$. Let us prove \eqref{eq: second}. Let us differentiate~\eqref{eq: first}, we get
$$
\frac{\phi'\phi^{(2k)}}{(2k)!}+\frac{\phi\phi^{(2k+1)}}{(2k)!}=\frac{B_{2k}\phi'}{(2k)!}-\sum_{i=0}^{k}\frac{B_{2i}\phi^{(2k-2i+2)}z}{(2i)!(2k-2i+1)!}-\sum_{i=0}^{k}\frac{B_{2i}\phi^{(2k-2i+1)}}{(2i)!(2k-2i+1)!}+\delta_{k,0}\frac{z}{2}.
$$ 
Using~\eqref{eq: first derivative} and the induction assumption, we get
\begin{gather}\label{eq: tmp1}
\left(-\frac{\phi^2}{z}+\frac{z}{4}\right)\frac{\phi^{(2k)}}{(2k)!}+\frac{\phi\phi^{(2k+1)}}{(2k)!}=\left(-\frac{\phi^2}{z}+\frac{z}{4}\right)\frac{B_{2k}}{(2k)!}-\sum_{i=0}^{k}\frac{B_{2i}\phi^{(2k-2i+2)}z}{(2i)!(2k-2i+1)!}+\delta_{k,0}\frac{z}{4}.
\end{gather}
From the induction assumption it follows that
\begin{gather}\label{eq: tmp2}
\frac{\phi^2\phi^{(2k)}}{(2k)!}=\frac{B_{2k}\phi^2}{(2k)!}-\frac{\phi\phi^{(2k+1)}z}{(2k+1)!}-\sum_{i=0}^k\frac{B_{2i}\phi^{(2k-2i+2)}z^2}{(2i-1)!(2k-2i+2)!}+\frac{\phi^{(2k)}z^2}{4(2k)!}-\frac{B_{2k}z^2}{4(2k)!}+\frac{\delta_{k,0}z^2}{4}.
\end{gather}
After substituting \eqref{eq: tmp2} into \eqref{eq: tmp1} we get \eqref{eq: second}.

Suppose that \eqref{eq: second} is true and also \eqref{eq: first} is true, for any $k'\le k$. Then the proof of \eqref{eq: first} for $k'=k+1$ can be done in a completely similar way. This concludes the proof of the lemma.


\begin{thebibliography}{BPS12a}

\bibitem[BPS12a]{BPS12a} A. Buryak, H. Posthuma, S. Shadrin. On deformations of quasi-Miura transformations and the Dubrovin-Zhang bracket. {\it Journal of Geometry and Physics} {\bf 62} (2012), no. 7, 1639-1651.

\bibitem[BPS12b]{BPS12b} A. Buryak, H. Posthuma, S. Shadrin. A polynomial bracket for the Dubrovin-Zhang hierarchies. {\it Journal of Differential Geometry} {\bf 92} (2012), no. 1, 153-185.

\bibitem[DZ05]{DZ05} B.~A.~Dubrovin, Y.~Zhang. Normal forms of hierarchies of integrable PDEs, Frobenius manifolds and Gromov-Witten invariants. A new 2005 version of {\it arXiv:math/0108160v1}, 295 pp.

\bibitem[FP03]{FP03} C. Faber, R. Pandharipande. Hodge integrals, partition matrices, and the $\lambda_g$-conjecture. {\it Annals of Mathematics} {\bf 157} (2003), no. 1, 97-124.

\bibitem[GKP94]{GKP94} R. L. Graham, D. E. Knuth, O. Patashnik. Concrete Mathematics: A Foudation for Computer Science. Addison-Wesley Professional, 2 edition (March 10, 1994), 672 pp.

\bibitem[Kaz09]{Kaz09} M. Kazarian. KP hierarchy for Hodge integrals. {\it Advances in Mathematics} {\bf 221} (2009), 1-21.

\bibitem[Kon92]{Kon92} M. Kontsevich. Intersection Theory on the Moduli Space of Curves and the Matrix Airy Function. {\it Communications in Mathematical Physics} {\bf 147} (1992), 1-23.

\bibitem[LZ05]{LZ05} Si-Qi Liu, Y. Zhang. On quasi-triviality and integrability of a class of scalar evolutionary PDEs. {\it Journal of Geometry and Physics}~{\bf 57} (2006), no. 1, 101-119. 

\bibitem[MZ00]{MZ00} Y. Manin, P. Zograf. Invertible cohomological field theories and Weil-Petersson volumes. {\it Annales de l'institut Fourier} {\bf 50} (2000), no. 2, 519-535. 

\bibitem[SAK79]{SAK79} J. Satsuma, M. J. Ablowitz, Y. Kodama. On an internal wave equation describing a stratified fluid with finite depth. {\it Physics Letters A} {\bf 73} (1979), no. 4, 283-286.

\bibitem[Sha09]{Sha09} S. Shadrin. BCOV theory via Givental group action on cohomological field theories. {\it Moscow Mathematical Journal} {\bf 9} (2009), no. 2, 411-429.

\bibitem[Wit91]{Wit91} E. Witten. Two-dimensional gravity and intersection theory on moduli space. {\it Surveys in differential geometry} (Cambridge, MA, 1990), 243-310, Lehigh Univ., Bethlehem, PA, 1991.

\end{thebibliography}
\end{document}